\newtheorem{theorem}{\textbf{Theorem}} 
\newtheorem{conjecture}{\textbf{Conjecture}} 
\newtheorem{lemma}{\textbf{Lemma}}
\newtheorem{corollary}{\textbf{Corollary}}
\newtheorem{definition}{\textbf{Definition}}
\begin{document}
\title{Minimum Length Scheduling for Full Duplex Time-Critical Wireless Powered Communication Networks}
\author{Muhammad Shahid~Iqbal, Yalcin Sadi,~\IEEEmembership{Member,~IEEE}
        and~Sinem Coleri,~\IEEEmembership{Senior Member,~IEEE}% <-this % stops a space
\thanks{Muhammad Shahid~Iqbal and Sinem Coleri are with the department of Electrical and Electronics Engineering, Koc University, Istanbul, Turkey, email: $\lbrace$miqbal16, scoleri$\rbrace$@ku.edu.tr. Yalcin Sadi is with the department of Electrical and Electronics Engineering, Kadir Has University, Istanbul, Turkey, email: yalcin.sadi@khas.edu.tr. This work is supported by Scientific and Technological Research Council of Turkey Grant $\#$117E241.}}% <-this % stops a space
\maketitle
\begin{abstract}
Radio frequency (RF) energy harvesting is key in attaining perpetual lifetime for time-critical wireless powered communication networks (WPCNs) due to full control on energy transfer, far field region, small and low-cost circuitry. In this paper, we propose a novel minimum length scheduling problem to determine the optimal power control, time allocation and schedule subject to data, energy causality and maximum transmit power constraints in a full-duplex WPCN. We first formulate the problem as a mixed integer non-linear programming problem and conjecture that the problem is NP-hard. As a solution strategy, we demonstrate that the power control and time allocation, and the scheduling problems can be solved separately in the optimal solution. For the power control and time allocation problem, we derive the optimal solution by evaluating Karush-Kuhn-Tucker conditions. For the scheduling, we introduce a penalty function allowing reformulation of the problem as a sum penalty minimization problem. Upon derivation of the optimality conditions based on the characteristics of the penalty function, we propose two polynomial-time heuristic algorithms and a reduced-complexity exact algorithm employing smart pruning techniques. Via extensive simulations, we illustrate that the proposed heuristic schemes outperform the schemes for predetermined transmission order of users and achieve close-to-optimal solutions.
\end{abstract}
\begin{IEEEkeywords} 
Energy harvesting, wireless powered communication networks, full duplex, power control, scheduling.
\end{IEEEkeywords}
\IEEEpeerreviewmaketitle

\section{Introduction} \label{sec:intro}
Time critical wireless sensor networks have been widely used in emergency alert systems and cyber-physical systems due to many advantages, including easy installation and maintenance, low complexity and cost, and flexibility \cite{wncs_sinem, intravehicle_sinem}. Several studies have been conducted on minimizing the schedule length given the traffic demand and limited battery lifetime of the users in these networks \cite{minlength_sinem, wncs_ref194}. However, recent developments in energy harvesting technologies have the potential to provide perpetual energy, eliminating the need to replace batteries.  
Considering the advantages of having full control on energy transfer, high range and small form factor, radio frequency (RF) energy harvesting is the most suitable technology \cite{harvest_10}. 
The recent advances in the design of highly efficient RF energy harvesting hardware is expected to even further extend its usage \cite{RFEH_efficientDesign1,RFEH_TFET_design,RFEH_efficientDesign2}. 

RF energy harvesting networks have been previously studied in the context of simultaneous wireless information and power transfer (SWIPT) and wireless powered communication networks (WPCNs). In SWIPT, the access point transmits energy and data simultaneously to multiple receivers in the downlink. The trade-off between wireless information transmission capacity and wireless energy transmission efficiency of a single user has been analyzed for point-to-point transmissions considering additive white gaussian noise (AWGN) channels \cite{harvest_06}, flat-fading channels \cite{harvest_61}, co-located or separated energy harvester and information decoder setup \cite{harvest_08}, and a non-linear energy harvesting model \cite{harvest_new63}. 

SWIPT based multi-user systems mostly optimize the performance 
by incorporating maximization of the weighted energy transfer 
\cite{harvest_19}, throughput maximization \cite{harvest_new62} and 
total transmit power minimization \cite{harvest_59} as objective 
function constrained by minimum signal to noise ratio, data buffer limit and harvested power, respectively. The energy efficiency of 
these systems is studied by considering the power budget \cite{SWIPT_EE},
 artificial noise \cite{PCSI1}, relay based setup \cite{SWIPT_EE1} 
and co-variance channel state information feedback \cite{harvest_21}.
These studies assume the simultaneous transmission of energy and information without considering any scheduling. The scheduling of SWIPT based networks has been considered in a limited context in \cite{harvest_01_ref146} and \cite{harvest_01_ref169}. The time is divided into multiple slots. In each time slot, a single user is selected for information reception while energy is transferred to the remaining users. The scheduling algorithms are proposed for the selection of this single user in each time slot. 

In WPCN, the wireless users harvest energy from the access point in the downlink and then transmit data to the access point in the uplink. The first protocol proposed for WPCN, called \emph{harvest-then-transmit}, is based on dynamic time-division multiple access (TDMA) in a half-duplex framework. Each TDMA frame is divided into two non-overlapping variable length intervals used for the wireless energy transmission in downlink and information transmission of the users in uplink \cite{harvest_07}. The objective is to maximize the total throughput by optimally allocating the uplink and downlink transmission times. 
Since the objective of throughput maximization results in unfair achievable rates among different users, with the corresponding allocation substantially favoring near users with mostly better channel conditions, some of the later works have focused on alternative objective functions, such as maximization of minimum throughput \cite{harvest_04}, maximization of weighted sum rate of uplink information transmission \cite{harvest_50, harvest_44, harvest_55}, maximization of energy efficiency \cite{harvest_51} and minimization of schedule length \cite{harvest_sinem, harvest_elif}. Other studies, on the other hand, have included the usage of near users as relays by using some of their energy and time to relay information of the farther users \cite{harvest_39_ref17, harvest_39, harvest_56}. The order of information transmission in the uplink does not matter due to the non-overlapping characteristic of the wireless energy and information transmission, thus, no scheduling algorithm is required in these half-duplex systems. Although these studies impose a transmit power constraint on the access point for wireless energy transmission in the form of either the assignment of a constant value \cite{harvest_07, harvest_19, harvest_04, harvest_39_ref17, harvest_39, harvest_51, harvest_56} or constraint on its average and maximum value \cite{harvest_50, harvest_44, harvest_55}, no upper bound has been imposed on the transmit power of the users in their information transmission. Also, the initial battery level of the users are not considered in these works, except \cite{harvest_51}.

WPCNs recently started to incorporate full duplex technology with the goal of improving the amount of transferred energy by allowing the access point to simultaneously transfer wireless energy and receive information, and in some cases also enabling the concurrent reception of wireless energy and transmission of information at the users. The main challenge in full duplex systems is to mitigate the self-interference, where part of the transmitted signal is received by itself, thus interfering with the desired received signal. The recent advances in self-interference cancellation (SIC) techniques \cite{harvest_41_ref26, harvest_41_ref27} and their practical implementations \cite{harvest_30_ref19, harvest_30_ref26} placed full-duplex as one of the key transceiving techniques for 5G networks \cite{harvest_new66}. Full-duplex WPCN systems have been mostly formulated with the goal of maximizing the sum throughput by assuming either only access point operating in full-duplex mode \cite{harvest_30, harvest_40, harvest_50} or both access point and users operating in full-duplex mode \cite{harvest_50, harvest_41, harvest_new68}. \cite{harvest_30} assumes perfect self-interference cancellation, whereas \cite{harvest_40, harvest_41, harvest_50} include residual self-interference, proportional to the transmit power of the access point. 
Only, \cite{harvest_30} additionally considers the minimization of schedule length given the traffic demand of the links. In full-duplex systems, since the users can harvest energy during the transmission of other users, the order of transmission so scheduling of user transmissions is important. However, previous studies assume predetermined transmission order without considering any scheduling algorithm. Moreover, none of these studies consider any limitation on the transmit power of the users, while assuming either constant transmission power \cite{harvest_30, harvest_40} or a maximum power constraint \cite{harvest_50, harvest_41} for the access point. Furthermore, these studies assume that the energy required for the data transmission needs to be supplied by the wireless transfer, without considering the initial battery level of the users. 

The goal of this paper is to determine the optimal time allocation, power control, and scheduling with the objective of minimizing the schedule length subject to the traffic requirement, the maximum transmit power constraint, and the energy causality constraint of the users, for a time-critical WPCN. The original contributions of this paper are listed as follows:

\begin{itemize}
\item We propose a new optimization framework for a full-duplex WPCN, employing the maximum transmit power and energy causality constraints and considering the initial battery levels and full duplex energy harvesting capability of the users.
\item We characterize Minimum Length Scheduling Problem ($\cal{MLSP}$) aiming at determining the optimal power control, time allocation and scheduling with the objective of minimizing the completion time of the schedule subject to data, energy causality and maximum transmit power constraints of the users. We formulate the problem mathematically as a mixed integer nonlinear programming (MINLP) problem, which is non-convex and thus generally difficult to solve for a global optimum. We further conjecture that $\cal{MLSP}$ is NP-hard based on the reduction of Single Machine Scheduling Problem ($\cal{SMSP}$) which is proven to be NP-hard \cite{jiang2013single} to $\cal{MLSP}$. Then, we propose a solution framework based on the decomposition of $\cal{MLSP}$ to optimal power control and time allocation, and optimal scheduling problems.
\item We formulate the power control and time allocation problem as a convex optimization problem and derive the optimal solution in closed-form by evaluating the Karush-Kuhn-Tucker (KKT) conditions.
\item For the scheduling problem, we introduce a penalty function, defined as the difference between the actual and minimum possible transmission time of a user. This allows consideration of the schedule length minimization objective as the minimization of the sum of the penalties of the users. By exploiting the characteristics of the penalty function, we analyse the optimality conditions of the scheduling. Based on the derived optimality conditions, we propose two polynomial-time heuristic algorithms and one exact exponential-time algorithm with significantly reduced complexity based on smart enumeration techniques.
\item We evaluate the performance of the proposed scheduling algorithms for various parameters, including transmit power of the access point, maximum transmit power of the users, and network size, in comparison to the optimal solution and conventional schemes proposed for the minimum length scheduling of users with a predetermined transmission order. We illustrate that the proposed polynomial time heuristic algorithms perform very close to optimal while outperforming previously proposed algorithms significantly.
\end{itemize}

The rest of the paper is organized as follows. Section \ref{sec:system} describes the WPCN model and assumptions used in the paper. Section \ref{sec:mlsp} presents the mathematical formulation of the minimum length scheduling problem, investigates its complexity and introduces our solution strategy based on the decomposition of the problem. Section \ref{sec:power} presents the optimal power control and time allocation problem and derives its optimal solution. Section \ref{sec:scheduling} presents the optimal scheduling problem, analyzes its optimality conditions, proposes one exact reduced complexity exponential-time algorithm and two polynomial-time heuristic algorithms. Section \ref{sec:simulation} evaluates the performance of the proposed scheduling schemes. Conclusions are presented in Section \ref{sec:conclusion}.

\section{System Model and Assumptions} \label{sec:system}

The system model and assumptions are described as follows:

\begin{enumerate}
\item The WPCN architecture consists of a hybrid access point (HAP) and N users; i.e., sensors and machine type communication (MTC) devices. Both the HAP and the users are equipped with one full-duplex antenna. Full duplex antennas are used for simultaneous wireless energy transfer on the downlink from the HAP to the users and data transmission on the uplink from the users to the HAP. We have considered single antenna HAP to decrease the complexity of the HAP and algorithm design at the first step of the study. 

\item We consider Time Division Multiple Access (TDMA) as medium access control protocol for the uplink data transmission from the users to the HAP. The time is partitioned into scheduling frames, which are further divided into variable-length slots each allocated to a particular user.
The energy transfer from the HAP to the users continues throughout the frame. Each user can use the energy it harvests from the beginning of the frame till the end of its transmission, including both its own dedicated time slot and the time slots allocated for the previously scheduled users. The energy harvested by a user after its dedicated slot can be stored in the battery for possible usage in the subsequent scheduling frames.

\item The HAP is equipped with a stable energy supply and continuously transfers wireless energy with a constant power $P_h$. The users operate in harvest-use-store (HUS) mode in which they prioritize the harvested energy for data transmissions over the energy stored in the battery \cite{HUS}. HUS mode is more energy efficient compared to harvest-store-use (HSU) mode, where the harvested energy is first stored in a battery before its subsequent use \cite{HSU} due to severe storage loss for batteries with low storage efficiency. Besides, HUS mode introduces negligible processing delay due to the direct use of the harvested energy compared to HSU mode. In HUS mode, each user $i$ harvests energy from the HAP during the entire scheduling frame. If the harvested energy is larger than or equal to the energy required for data transmission, user $i$ uses it directly for data transmission and stores the excess energy in a rechargeable battery with an initial energy $B_i$ at the beginning of the scheduling frame. Otherwise, it uses all harvested energy and a partial energy from the battery.

\item The downlink and uplink channel gains, denoted by $h_i$ and $g_i$ for user $i$, respectively, are assumed to be different. 
Both channels are assumed to be block-fading, i.e., the channel gains remain constant over the scheduling frame, as commonly used in previous WPCN studies  \cite{harvest_39, harvest_51, harvest_56}. 
We assume the availability of the channel information to focus on the joint optimization of power, time allocation and scheduling in the first step of the study. The perfect channel state information is a very common assumption in recent literature on WPCN \cite{PCSI2,PCSI1}. The time and energy cost of the collection of the channel state information can be considered negligible for a low mobility network \cite{PCSI2}.
\item We assume a realistic non-linear energy harvesting model based on logistic function \cite{NLEH_01}, which performs close to the experimental results proposed in \cite{NLEH_Prac_01}. The energy harvesting rate for user $i$ is given by 
\begin{equation} \small
C_i=\dfrac{P_s[\Psi_i-\Omega_i]}{1-\Omega_i},
\end{equation}
where $\Omega_i=\dfrac{1}{1+e^{A_iB_i}}$ is a constant to ensure zero-input zero-output response, $P_s$ is the maximum harvested power during saturation and $\Psi_i$ is the logistic function related to user $i$ given by:
\begin{equation} \small
\Psi_i=\dfrac{1}{1+e^{-A_i(h_iP_h-B_i)}},
\end{equation}
where $A_i$ and $B_i$ are the positive constants related to the non-linear charging rate with respect to the input power and turn-on threshold, respectively. For a given energy harvesting circuit, the parameters $P_s$, $A_i$ and $B_i$ can be determined by curve fitting.
\item We assume that user $i$ has to transmit $D_i$ bits over the scheduling frame. 

\item We use continuous rate transmission model, in which Shannon's channel capacity formulation for an AWGN wireless channel is used in the calculation of the maximum achievable rate as a function of Signal-to-Interference-plus-Noise Ratio (SINR) as $ x_i = W \log_{2}(1+k_iP_i)$, where $x_i$ is the transmission rate of user $i$, $P_i$ is the transmission power of user $i$, $W$ is the channel bandwidth, and $k_i$ is defined as $g_i/(N_0 W+\beta P_h)$, in which the term $\beta P_h$ is the power of self interference at the HAP and $N_0$ is the noise power density. Although the networks are generally restricted to support discrete rates, the continuous rate assumption is conventionally used in most of the studies in the literature \cite{harvest_new62,harvest_07,harvest_30,harvest_50}.
 \item We use continuous power model in which the transmission power of a user can take any value below a maximum level $P_{max}$ imposed to avoid the interference to nearby systems.
\end{enumerate}

\section{Minimum Length Scheduling Problem} \label{sec:mlsp}

In this section, we introduce the minimum length scheduling problem referred as $\cal{MLSP}$. We first present the mathematical formulation of $\cal{MLSP}$ as an optimization problem and investigate its complexity. Then, we provide the solution strategy followed in the subsequent sections.
\vspace{-5mm}
\subsection{Mathematical Formulation}
The joint optimization of the time allocation, power control and scheduling with the objective of minimizing the schedule length given the traffic demands of the users while considering realistic transmission model for a full-duplex system is formulated as follows:

$\cal{MLSP}$:
\begin{subequations} \label{opt_problem} \small
\begin{align}
& \textit{minimize}
& & \sum_{i=0}^{N}\tau_i \label{obj}\\
& \textit{subject to}
& & W\tau_i\log_2\Big(1+k_iP_i\Big)\geq D_i, \hspace*{0.1cm} i \in \{1,...,N\} \label{datarate}\\
&&& B_i+C_i  \sum_{j=0}^{N}a_{ji}\tau_j +(C_i - P_i)\tau_i\geq 0, \hspace*{0.1cm} i \in \{1,...,N\} \label{energyharvesting}\\
&&& a_{ij}+a_{ji}=1, \hspace*{0.1cm} i<j, i,j \in \{1,...,N\} \label{ordering}\\
&&& P_i\leq P_{max}, \hspace*{0.1cm} i \in \{1,...,N\}  \label{pmax} \\
& \textit{variables}
& & P_i \geq 0, \hspace*{0.1cm} \tau_i\geq 0, \hspace*{0.1cm} a_{ij} \in \{0,1\}, \hspace*{0.1cm} i,j \in \{1,...,N\} .\label{pcp1_vars}
\end{align}
\end{subequations}

The variables of the problem are $P_i$, the transmit power of user $i$; $\tau_i$, the transmission time of user $i$, and $a_{ij}$, binary variable that takes value $1$ if user $i$ is scheduled before user $j$ and $0$ otherwise. In addition, $\tau_0$ denotes an initial waiting time duration during which all the users only harvest energy without transmitting any information \cite{harvest_50}  .

The objective of the optimization problem is to minimize the schedule length as given by Eq. (\ref{obj}). Eq. (\ref{datarate}) represents the constraint on satisfying the traffic demand of the users. Eq. (\ref{energyharvesting}) gives the energy causality constraint: The total amount of available energy, including both the initial energy and the energy harvested until and during the transmission of a user, should be greater than or equal to the energy consumed during its transmission. Eq. (\ref{ordering}) represents the scheduling constraint, stating that if user $i$ transmits before user $j$, user $j$ cannot transmit before user $i$. Eq. (\ref{pmax}) represents the maximum transmit power constraint. 

This optimization problem is a MINLP thus difficult to solve for a global optimum \cite{opt_book}. 
\vspace{-5mm}

\subsection{Complexity Analysis} \label{sec:complexity}

In this section, we provide the conjecture on the NP hardness of $\cal{MLSP}$, as strongly supported by the evidence provided thereafter, since a formal proof demonstrating the reduction of an NP complete problem to the decision version of $\cal{MLSP}$ in polynomial time cannot be provided.
\begin{conjecture}
$\cal{MLSP}$ is NP-hard.
\end{conjecture}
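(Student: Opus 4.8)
The plan is to establish the NP-hardness of $\cal{MLSP}$ by exhibiting a polynomial-time reduction from the single machine scheduling problem $\cal{SMSP}$, whose NP-hardness is established in \cite{jiang2013single}, to the decision version of $\cal{MLSP}$. First I would state that decision version: given a target length $T$, decide whether there exist orderings $a_{ij}$, powers $P_i$ and times $\tau_i$ satisfying (\ref{datarate})--(\ref{pcp1_vars}) with $\sum_{i=0}^{N}\tau_i \le T$. The reduction must then map an arbitrary $\cal{SMSP}$ instance with $n$ jobs to an $\cal{MLSP}$ instance with $N=n$ users in polynomial time, so that the constructed instance admits a schedule of length at most $T$ if and only if the $\cal{SMSP}$ instance admits a schedule of cost at most the corresponding threshold.

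The central structural observation driving the reduction is that, once the transmission order $\pi$ is fixed, the coupling between users occurs only through the energy-causality constraint (\ref{energyharvesting}): the available energy of a user grows with the total transmission time $\sum_{j} a_{ji}\tau_j$ of the users scheduled before it, so users appearing later in $\pi$ may transmit at higher power and hence require less time. Consequently, after solving the inner power--time allocation of Section \ref{sec:power}, the schedule length collapses to a function of the permutation alone, turning $\cal{MLSP}$ into a pure sequencing problem. This mirrors $\cal{SMSP}$, where the cost contribution of a job depends on the cumulative processing time of its predecessors, i.e., on its start or completion time. I would therefore choose the parameters $D_i$, $h_i$, $g_i$, $B_i$, $P_h$ and $P_{max}$ so that the penalty induced on user $i$ by the harvesting term reproduces, up to an affine transformation, the per-job cost of $\cal{SMSP}$, with the transmission time of a preceding user playing the role of the processing time of a preceding job. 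Correctness would follow by showing that a feasible $\cal{SMSP}$ schedule yields a feasible $\cal{MLSP}$ schedule of the matched length, and conversely.

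The hard part, and the reason only a conjecture can be stated, is that the $\tau_i$ in $\cal{MLSP}$ are not fixed inputs but decision variables coupled to the order through the \emph{nonlinear} logistic harvesting rate $C_i$ and the logarithmic rate constraint (\ref{datarate}). Engineering an instance in which the optimal continuous allocation forces the $\tau_i$ to take exactly the values prescribed by a discrete $\cal{SMSP}$ gadget, while keeping every parameter of polynomial size and expressible with polynomially many bits, appears to require exact control over the transcendental expressions arising from the logistic and logarithm functions. This tension between the discrete combinatorial structure of $\cal{SMSP}$ and the smoothing effect of the continuous, interdependent power--time optimization is the principal obstruction to a fully rigorous reduction; absent a gadget that resolves it, we present the structural correspondence above as strong evidence and state the result as a conjecture.
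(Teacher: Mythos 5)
Your proposal takes essentially the same route as the paper: both argue hardness via a correspondence with the single-machine scheduling problem of \cite{jiang2013single} (jobs, processing times and makespan mapped to users, transmission times and schedule length, with the learning effect playing the role of energy harvesting), and both stop short of a formal reduction for exactly the obstruction you identify --- the transmission times are continuous decision variables governed by transcendental rate and harvesting expressions, so an $\cal{SMSP}$ gadget cannot be realized exactly, leaving the statement as a conjecture supported by structural evidence. The only element you omit is the paper's secondary evidence, namely that the discrete-rate variant of $\cal{MLSP}$ is formally proven NP-hard in \cite{iqbal2020minimum}.
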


First strong evidence supporting the conjecture arises from the analogy between $\cal{MLSP}$ and the single-machine scheduling problem studied in \cite{jiang2013single}, denoted by $\cal{SMSP}$.

$\cal{SMSP}$ is characterized as follows:\\
$\cal{SMSP}$: Given a set ${\cal{J}} = \lbrace J_1, ..., J_n\rbrace$ of jobs, a normal processing time $p_i \geq 0$, a learning coefficient $\alpha_i\leq 0$, an actual processing time function $p_{ir}$ for each job $J_i$, and a positive integer $y$, is there a schedule with makespan $C \leq y$, where makespan is defined as the completion time of the last scheduled job? The actual processing time $p_{ir}$ of a job $J_i$ scheduled in the $r^{th}$ position is defined as
\begin{equation} \label{eq:learning1}
p_{ir} = \left( 1 + p_{[1]}^A + p_{[2]}^A + ... + p_{[r-1]}^A \right)^{\alpha_i} p_i,
\end{equation}
where $p_{[k]}^A$ is the actual processing time of the job scheduled in the $k^{th}$ position with $p_{[1]}^A = p_{[1]}$, where $p_{[k]}$ is the normal processing time of the job scheduled in the $k^{th}$ position. Note that normal processing time is the processing time of a job if it is scheduled first; i.e., $r=1$.

$\cal{SMSP}$ is proven to be NP-complete by reducing the well-known NP-complete Partition problem to $\cal{SMSP}$ in polynomial-time in \cite{jiang2013single}. Consequently, minimizing the makespan for a set of jobs with the characteristics detailed in the characterization of $\cal{SMSP}$ is NP-hard. Note that $\cal{SMSP}$ is the decision version of the makespan minimization problem.

The processing time given in Eq. (\ref{eq:learning1}) is based on a \textit{time and job-dependent learning} model such that as the sum of the actual processing time of the previously completed jobs increases, a particular job is processed in shorter time with respect to its normal processing time. We can assume that the schedule starts at time $t=0$ without loss of generality. Then, based on Eq. (\ref{eq:learning1}), the actual processing time of a job $J_i$ scheduled at time $t$ can be given as
\begin{equation} \label{eq:learning2}
p_{i}(t) = \left( 1 + t \right)^{\alpha_i} p_i.
\end{equation}
Now consider $\cal{MLSP}$. Jobs, processing times and makespan in $\cal{SMSP}$ correspond to users, transmission times and schedule length in $\cal{MLSP}$, respectively. Moreover, the effect of learning on the processing time of a job in $\cal{SMSP}$ corresponds to the effect of energy harvesting on the transmission time of a user in $\cal{MLSP}$. Let $\tau_{i}(t)$ be the transmission time of user $i$ scheduled at time $t$. Due to energy harvesting, user $i$ can complete its transmission in less time as the scheduling time increases; i.e., $\tau_{i}(t)$ is a monotonically decreasing function of $t$, since it will be able to transmit with larger transmit power unless constrained by maximum transmit power level $P_{max}$. Therefore, for a sufficiently large $P_{max}$ value, such that no user can afford to transmit with a transmit power greater than or equal to $P_{max}$ in the optimal solution, $\tau_{i}(t)$ can be formulated as 
\begin{equation} \label{eq:learning3}
\tau_{i}(t) = \left( 1 + t \right)^{f_i(t)} \tau_i,
\end{equation}
where $f_i(t)$ is a time dependent function representing the dependency of the scheduling time on the energy harvesting rate of user $i$ such that $f_i(t)\leq 0$ for $t\geq 0$ and $\tau_i$ is the transmission time of user $i$ if it is scheduled first, i.e., $\tau_i=\tau_{i}(0)$. Then, the learning effect based processing time model given by Eq. (\ref{eq:learning2}) is an instance of the transmission time model in Eq. (\ref{eq:learning3}) for $f_i(t) = \alpha_i$; i.e., $f_i(t)$ is a constant. Hence, considering the additional complexity introduced by the time-dependence in the problem, the decision version of $\cal{MLSP}$ can be considered at least as \textit{hard} as $\cal{SMSP}$.

Second evidence supporting the Conjecture exists in our study on the discrete-rate transmission model based version of $\cal{MLSP}$ \cite{iqbal2020minimum} for which we formally prove the NP-hardness.

\subsection{Solution Strategy}

As the mathematical formulation and the complexity analysis presented in previous sections suggest, it is \textit{difficult} to solve $\cal{MLSP}$ for a global optimum, i.e., finding a global optimum requires algorithms with exponential complexity. Such optimal algorithms are intractable even for moderate problem sizes. In this paper, we present a solution framework to overcome this intractability based on the decomposition of the optimal power and time allocation and the optimal scheduling problems as described below:

\begin{itemize}
\item For a given scheduling order of the users, $\cal{MLSP}$ requires determining the optimal power and time allocation of the users with minimum schedule length while considering their data, maximum transmit power and energy causality constraints. We first show that this problem is a convex optimization problem suggesting that it is polynomial-time solvable. Then, we provide the optimal solution based on the analysis of the KKT conditions.
\item Determining the optimal power and time allocation for a given scheduling order reduces $\cal{MLSP}$ to the optimization of the scheduling order. We first introduce penalty function defined as the difference between the actual and minimum possible transmission time of a user and demonstrate the equivalence between schedule length minimization objective and the minimization of the sum of the penalties of the users. Then, based on the optimality conditions derived using the penalty function, we propose two polynomial-time heuristic algorithms that perform very close to optimal. Furthermore, we propose an exact exponential-time algorithm with reduced complexity based on smart pruning techniques.
\end{itemize}

\section{Optimal Power Control}  \label{sec:power}

In this section, we are interested in determining the optimal power control and time allocation to minimize the schedule length for a given transmission order of a set of users; i.e. $a_{ij}$'s are given in $\cal{MLSP}$. 

We first illustrate that inclusion of $\tau_0$, the initial waiting time, is not actually needed.

\begin{lemma} \label{lemma_tau}
In the optimal solution of $\cal{MLSP}$, $\tau_0=0$.
\end{lemma}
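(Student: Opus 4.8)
The plan is to prove the statement by an exchange argument: starting from any optimal solution, I would show that a positive waiting time can be removed without increasing the objective and without violating any constraint, so that there is always an optimal solution with $\tau_0=0$. Let $(\{P_i\},\{\tau_i\},\{a_{ij}\})$ be optimal and suppose, toward a contradiction with optimality being achievable only at $\tau_0>0$, that $\tau_0>0$. Let $k$ be the first scheduled user, i.e. the user with $a_{jk}=0$ for every real user $j$, so that the only interval preceding user $k$'s transmission is the waiting interval of length $\tau_0$.

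The key structural observation I would rely on is that energy harvesting proceeds throughout the entire frame, so the energy harvested by any user $i$ before it begins transmitting equals $C_i$ times the total elapsed time before its slot, regardless of whether that elapsed time is spent idle or in the active transmission of earlier users. I would exploit this by transferring the waiting duration into the slot of user $k$: set $\tau_0'=0$ and $\tau_k'=\tau_0+\tau_k$, leaving $\tau_i'=\tau_i$ for all other users. Because the removed waiting time is exactly compensated by the lengthening of slot $k$, the start time of every user scheduled after $k$ is unchanged; hence the quantity $C_i\sum_j a_{ji}\tau_j'$ appearing in (\ref{energyharvesting}) is identical to its original value for every $i\neq k$, and the data constraints (\ref{datarate}) continue to hold with the original powers $P_i$.

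It then remains only to re-serve user $k$'s demand $D_k$ over the longer slot $\tau_k'$. I would keep a constant power $P_k'$ satisfying (\ref{datarate}) with equality; since $\tau_k'>\tau_k$ this forces $P_k'<P_k\le P_{max}$, so (\ref{pmax}) is preserved. For the energy-causality constraint of user $k$ I would invoke the fact that the energy $P_k'\tau_k'$ required to deliver a fixed number of bits is non-increasing in the allotted time, i.e. $P_k'\tau_k'\le P_k\tau_k$; this is the technical crux and follows from the monotonicity of $\tau\mapsto\tau\log_2(1+c/\tau)$, itself a consequence of the concavity of the logarithmic rate function. Combining $P_k'\tau_k'\le P_k\tau_k$ with the original feasibility $B_k+C_k\tau_0+(C_k-P_k)\tau_k\ge 0$, equivalently $P_k\tau_k\le B_k+C_k\tau_k'$, yields $B_k+(C_k-P_k')\tau_k'\ge 0$, which is exactly (\ref{energyharvesting}) for user $k$ under $\tau_0'=0$.

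Since $\tau_0'+\tau_k'=\tau_0+\tau_k$, the completion time is unchanged, so $\sum_{i=0}^{N}\tau_i'=\sum_{i=0}^{N}\tau_i$ and the modified solution is again optimal while satisfying $\tau_0=0$. I expect the main obstacle to be the coupling between removing $\tau_0$ and the energy budgets of the remaining users: the elapsed-time invariance of the second paragraph is what keeps those budgets intact, and the monotonicity of required energy in slot length is what keeps the first user feasible. These two facts should carry the entire argument, and $P_{max}$ causes no difficulty because the reallocation only lowers the first user's power.
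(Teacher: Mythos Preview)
Your proposal is correct and follows essentially the same exchange argument as the paper: absorb $\tau_0$ into the first scheduled user's slot, lower that user's power accordingly, and invoke the monotonicity of required energy in transmit power (equivalently, in slot length) to keep energy causality feasible. If anything, your version is more complete than the paper's, since you explicitly verify that the start times---and hence the constraints---of all later users are unchanged, and you state clearly that the objective is preserved rather than strictly decreased.
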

\vspace{-5mm}
\begin{proof}
Suppose that $\tau_0^*>0$ is the optimal energy harvesting time, $\lbrace\tau_1^*, \tau_2^*, ..., \tau_N^* \rbrace$ and $\lbrace P_1^*, P_2^*, ..., P_N^* \rbrace$ are the sets of optimal transmission times and transmit powers, respectively. Then, the energy consumed by user $1$ is $E_1^*= \tau_1^* P_1^*$ and $E_1^* \leq B_1 + C_1 (\tau_1^* + \tau_0^*)$ due to the energy causality constraint. Now consider that, instead of waiting for a duration $\tau_0^*$, user $1$ transmits the same amount of data in a time slot with length $\tau_1^{'}=\tau_1^* + \tau_0^*$ with transmit power $P_1^{'}< P_1^*$. Since the energy required for the transmission of a fixed amount of data $D_i$ is a monotonically increasing function of transmit power $P_i$; i.e., $E_i=\frac{D_i}{W \log_{2}(1+k_iP_i)} P_i$,  $E_1^{'}< E_1^*$. Then, the energy causality constraint is not violated since $E_1^{'}< E_1^* \leq B_1 + C_1 (\tau_1^* + \tau_0^*) = B_1 + C_1 \tau_1^{'}$. 
This is a contradiction.
\end{proof}

Note that $\tau_0$ can be interpreted as a delay in the transmission of a user which can complete its transmission without this delay in the same amount of time using less energy. Then, considering the transmission of an arbitrary user in a schedule, we have the following corollary.

\begin{corollary} \label{corollary_tau}
Delaying the transmission of a user to harvest more energy also delays the completion time of the transmission of the user.
\end{corollary}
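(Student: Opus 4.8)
The plan is to argue by contradiction, reusing the energy-versus-power monotonicity that drives the proof of Lemma~\ref{lemma_tau}. Fix an arbitrary user $i$ and let the frame start at $t=0$. Denote by $S$ the total elapsed time before user $i$ begins transmitting, so that its completion time is $t_c=S+\tau_i$ and, by the energy causality constraint Eq.~(\ref{energyharvesting}), the energy available to it by completion equals $B_i+C_i t_c$ while the energy it consumes equals $P_i\tau_i$. I would compare two efficient transmissions of the same payload $D_i$: one with pre-transmission offset $S_0$ (no extra delay) and one with $S_1=S_0+\delta$ for some $\delta>0$ (the user waits an extra $\delta$ to harvest more energy), and show that the latter necessarily completes later, i.e. $t_c^{1}>t_c^{0}$.

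First I would record the consequences of minimum-time (optimal) power control. To finish $D_i$ bits as early as possible the user transmits with the largest affordable power, so, away from the $P_{max}$-saturated regime, the energy causality constraint Eq.~(\ref{energyharvesting}) is tight in each case, giving $P^{0}\tau^{0}=B_i+C_i t_c^{0}$ and $P^{1}\tau^{1}=B_i+C_i t_c^{1}$, while the rate constraint Eq.~(\ref{datarate}) is met with equality, $W\tau^{0}\log_2(1+k_iP^{0})=D_i=W\tau^{1}\log_2(1+k_iP^{1})$.

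Next I assume, for contradiction, that delaying does not postpone completion, $t_c^{1}\le t_c^{0}$. Two facts then collide. On one hand, tightness of Eq.~(\ref{energyharvesting}) yields $P^{1}\tau^{1}=B_i+C_i t_c^{1}\le B_i+C_i t_c^{0}=P^{0}\tau^{0}$, so the delayed transmission consumes no more energy. On the other hand, $t_c^{1}\le t_c^{0}$ together with $S_1=S_0+\delta$ forces $\tau^{1}=t_c^{1}-S_1\le t_c^{0}-S_0-\delta=\tau^{0}-\delta<\tau^{0}$; that is, the delayed transmission squeezes the same $D_i$ bits into a strictly shorter slot and hence must use a strictly larger power $P^{1}>P^{0}$. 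Invoking the monotonicity established in Lemma~\ref{lemma_tau}, namely that $E_i=\frac{D_i}{W\log_2(1+k_iP_i)}P_i$ is strictly increasing in $P_i$, a strictly larger power forces strictly larger energy, $P^{1}\tau^{1}>P^{0}\tau^{0}$, contradicting the previous inequality. Hence $t_c^{1}>t_c^{0}$, which is exactly the claim.

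The main obstacle, and the step I would treat most carefully, is justifying that the energy constraint is binding, so that the comparison of \emph{available} energy transfers cleanly to a comparison of \emph{consumed} energy; this is precisely where the assumption that the user is not saturated at $P_{max}$ enters, a regime already isolated in the complexity discussion. Note that the $P_{max}$-limited case only strengthens the statement, since a saturated user cannot speed up at all when handed more energy, making the delay purely additive. A secondary bookkeeping point is keeping the identity $t_c=S+\tau_i$ consistent with the way the pre-transmission harvesting time enters Eq.~(\ref{energyharvesting}); once that is fixed, the strict inequality $\tau^{1}<\tau^{0}$ and the monotonicity of $E_i$ close the argument.
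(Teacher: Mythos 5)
Your overall engine---contradiction via the monotonicity of $E_i=\frac{D_i}{W\log_2(1+k_iP_i)}P_i$ in $P_i$ combined with energy-causality accounting---is exactly what the paper itself relies on: its ``proof'' of this corollary is just the observation that the exchange argument of Lemma~\ref{lemma_tau} applies to an arbitrary user, so in spirit you are on the paper's route. However, your case analysis has a hole: the mixed regime in which the undelayed transmission is energy-limited (causality tight, $P^{0}<P_{max}$) while the delayed transmission, having harvested more, transmits at $P^{1}=P_{max}$ with \emph{slack} energy. This regime is covered neither by your main argument, which asserts tightness of Eq.~(\ref{energyharvesting}) for both transmissions (the equality $P^{1}\tau^{1}=B_i+C_it_c^{1}$ is false when the delayed user is power-limited), nor by your closing remark, whose justification---``a saturated user cannot speed up at all\,\dots\, making the delay purely additive''---presumes the \emph{baseline} is already saturated. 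In the mixed regime the delayed user does speed up (its slot shrinks from $\tau^{0}$ to $t_i^{min}$), so additivity fails and one still has to rule out $\delta+t_i^{min}\le\tau^{0}$. This is not an exotic corner: a user crossing from the energy-limited to the $P_{max}$-limited regime as its start time grows is precisely the event at which the paper's penalty function reaches zero, so it is the case the corollary is most often invoked on in Theorem~\ref{thm_pen}.

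The repair is one line, because your inequality chain never needs tightness for the \emph{delayed} transmission, only for the baseline. Keep $E^{0}=P^{0}\tau^{0}=B_i+C_it_c^{0}$ (baseline energy-limited), and for the delayed user use only the causality inequality $E^{1}=P^{1}\tau^{1}\le B_i+C_it_c^{1}$, which holds in every regime. Under the contradiction hypothesis $t_c^{1}\le t_c^{0}$ you still obtain $E^{1}\le B_i+C_it_c^{1}\le B_i+C_it_c^{0}=E^{0}$, while $\tau^{1}\le\tau^{0}-\delta<\tau^{0}$ forces $P^{1}>P^{0}$ and hence $E^{1}>E^{0}$ by the monotonicity cited from Lemma~\ref{lemma_tau}---a contradiction that covers both your first case and the mixed one; the remaining case (baseline already at $P_{max}$) is then correctly dispatched by your additivity remark. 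With that modification your proof is correct and amounts to the rigorous, fully cased version of the argument the paper only sketches.
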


Next, we mathematically formulate the power control and time allocation problem, denoted by $\cal{PCP}$. Without loss of generality, we assume that user $i$ transmits in time slot $i$. For brevity, we present the formulation with a variable transformation to illustrate its convexity as follows:

$\cal{PCP}$:
\begin{subequations} \label{opt_problem2} \small
\begin{align}
& \textit{minimize}
& & \sum_{i=1}^{N}\tau_i \label{obj2}\\
& \textit{subject to}
& & D_i-W\tau_i\log_2\Big(1+k_i\alpha_i/\tau_i\Big)\leq 0, \hspace*{0.1cm} i \in \{1,...,N\} \label{datarate2}\\
&&& \alpha_i\leq \tau_i P_{max}, \hspace*{0.1cm} i \in \{1,...,N\} \label{pmax2}\\
&&& \alpha_i-B_i-C_i\sum_{j=1}^{i}\tau_j \leq 0, \hspace*{0.1cm} i \in \{1,...,N\} \label{energyharvesting2}\\
& \textit{variables}
& & \alpha_i \geq 0, \hspace*{0.1cm} \tau_i\geq 0, \hspace*{0.1cm} i \in \{1,...,N\}\label{pcp2_vars}
\end{align}
\end{subequations}
where $\alpha_i\triangleq P_i \tau_i$ denotes the energy consumption of user $i$ during its transmission with transmit power $P_i$ in the time slot $i$ with length $\tau_i$. The variables of the problem are $\tau_i$, the transmission time of user $i$, and $\alpha_i$, the energy consumption of user $i$ during its transmission. It can be observed that $\cal{PCP}$ is a convex optimization problem considering the linearity of the objective, the convexity of the constraint (\ref{datarate2}) and the affineness of the constraints (\ref{pmax2}) and (\ref{energyharvesting2}).

Since $\cal{PCP}$ is a convex optimization problem, it can be solved by evaluating its KKT conditions, which specify necessary and sufficient conditions for an optimal solution of a convex optimization problem; i.e., any feasible point satisfying the KKT conditions is a global optimum point \cite{ex24}.

KKT conditions of $\cal{PCP}$ are as follows:
 \begin{subequations} \label{eqs:gra} \small
\begin{align}
&1+W\lambda_{1i}\Bigg(\frac{k_i\alpha_i}{\tau_i+k_i\alpha_i}-\log_2\Big(1+k_i\alpha_i/\tau_i\Big)\Bigg)-\lambda_{2i}P_{max}-\lambda_{3i}C_i=0, \nonumber \\
& i \in \{1,...,N\} \label{gra1}\\
& \lambda_{2i}-\frac{\lambda_{1i}Wk_i\tau_i}{\tau_i+k_i\alpha_i}+\lambda_{3i}=0, \hspace*{0.1cm} i \in \{1,...,N\} \label{gra2}
\end{align}
\end{subequations} 

%\vspace*{-10mm}
\begin{subequations} \label{eqs:comp} \small
\begin{align}
&\lambda_{1i}\bigg(D_i-W\tau_i\log_2\Big(1+k_i\alpha_i/\tau_i\Big)\bigg)=0, \hspace*{0.1cm} i \in \{1,...,N\} \label{comp1}\\
&\lambda_{2i}\Big(\alpha_i-\tau_iP_{max}\Big)=0, \hspace*{0.1cm} i \in \{1,...,N\} \label{comp2}\\
&\lambda_{3i}\Big(\alpha_i-B_i-C_i\sum_{j=1}^{i}\tau_j\Big)=0, \hspace*{0.1cm} i \in \{1,...,N\}\label{comp3}
\end{align}
\end{subequations}
where Eqs. (\ref{gra1})-(\ref{gra2}) and Eqs. (\ref{comp1})-(\ref{comp3}) represent the gradient and complementary slackness conditions, respectively, and $\lambda_{ji}\geq 0$ is the KKT multiplier associated with the $j^{th}$ constraint of the $i^{th}$ user, $j \in \{1,2,3\}$ corresponding to constraints (\ref{datarate2})-(\ref{energyharvesting2}), respectively. 
%$j \in [1,3]$
\begin{lemma} \label{lemma:data_cons}
In an optimal solution of $\cal{PCP}$, the constraint $(\ref{datarate2})$ must be satisfied with equality.
\end{lemma}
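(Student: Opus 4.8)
The plan is to prove the statement by contradiction, exploiting the KKT conditions (\ref{eqs:gra})--(\ref{eqs:comp}) already derived for $\cal{PCP}$, rather than constructing an explicit primal perturbation. Suppose, towards a contradiction, that in some optimal solution the data constraint (\ref{datarate2}) is \emph{slack} for at least one user $i$; i.e., $D_i - W\tau_i\log_2(1+k_i\alpha_i/\tau_i) < 0$. Since $\cal{PCP}$ is convex, any optimal point satisfies the KKT conditions, so I can read off relations among the nonnegative multipliers $\lambda_{1i},\lambda_{2i},\lambda_{3i}$.

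First I would invoke complementary slackness: because the left-hand side of (\ref{datarate2}) is strictly negative, condition (\ref{comp1}) immediately forces $\lambda_{1i}=0$. Next I would substitute $\lambda_{1i}=0$ into the gradient condition (\ref{gra2}), which then collapses to $\lambda_{2i}+\lambda_{3i}=0$; combined with the nonnegativity of the multipliers, this yields $\lambda_{2i}=\lambda_{3i}=0$. Finally I would substitute $\lambda_{1i}=\lambda_{2i}=\lambda_{3i}=0$ into the remaining gradient condition (\ref{gra1}). Every term except the leading constant vanishes, leaving $1=0$, which is the desired contradiction. Hence no such user $i$ can exist, and (\ref{datarate2}) must hold with equality for every user at the optimum.

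The argument is short because the heavy lifting was done in writing the KKT system; the one point that requires care is the vanishing of the $\lambda_{1i}$-term in (\ref{gra1}): even though the parenthetical factor $\tfrac{k_i\alpha_i}{\tau_i+k_i\alpha_i}-\log_2(1+k_i\alpha_i/\tau_i)$ is generally nonzero, the whole product is annihilated once $\lambda_{1i}=0$, so no sign analysis of that factor is needed. I would also remark why a direct primal approach is less convenient here: naively shrinking $\tau_i$ to reduce the objective also shrinks $\sum_{j=1}^{k}\tau_j$ for every later user $k\geq i$, tightening their energy-causality constraints (\ref{energyharvesting2}) and potentially rendering the perturbation infeasible, so one would have to co-adjust the $\alpha_i$ variables. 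The KKT route avoids this bookkeeping entirely, which is why I expect it to be the cleanest path and anticipate no genuine obstacle beyond correctly tracking the chain of implications among the multipliers.
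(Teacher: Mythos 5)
Your proof is correct and follows essentially the same route as the paper's: complementary slackness on (\ref{comp1}) gives $\lambda_{1i}=0$, the gradient condition (\ref{gra2}) with nonnegativity then forces $\lambda_{2i}=\lambda_{3i}=0$, and substituting all three into (\ref{gra1}) yields the contradiction $1=0$. The additional remarks on why the $\lambda_{1i}$-term vanishes and why a primal perturbation would be messier are sound but not needed; no gap here.
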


\begin{proof}
Suppose that in an optimal solution of $\cal{PCP}$, the constraint $(\ref{datarate2})$ is not satisfied with equality. For an optimal solution, KKT conditions given in Eqs. (\ref{eqs:gra})-(\ref{eqs:comp}) should be satisfied. Since constraint $(\ref{datarate2})$ is not binding, $\lambda_{1i}=0$ by Eq. (\ref{comp1}). Then, by Eq. (\ref{gra2}), $\lambda_{2i}=0$ and $\lambda_{3i}=0$ since $\lambda_{ji}\geq 0$. However, for $\lambda_{1i}=\lambda_{2i}=\lambda_{3i}=0$, Eq. (\ref{gra1}) is violated. This is a contradiction.
\end{proof}

\begin{lemma} \label{lemma:power_cons}
In an optimal solution of $\cal{PCP}$, either constraint $(\ref{pmax2})$ or $(\ref{energyharvesting2})$ must be satisfied with equality.
\end{lemma}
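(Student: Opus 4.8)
The plan is to argue by contradiction using the KKT conditions, in direct analogy with the proof of Lemma~\ref{lemma:data_cons}. Suppose that in an optimal solution of $\cal{PCP}$ neither constraint $(\ref{pmax2})$ nor $(\ref{energyharvesting2})$ is satisfied with equality for some user $i$. Since any optimal point must satisfy the KKT conditions $(\ref{eqs:gra})$--$(\ref{eqs:comp})$, the complementary slackness conditions $(\ref{comp2})$ and $(\ref{comp3})$ then force $\lambda_{2i}=0$ and $\lambda_{3i}=0$, because the bracketed terms they multiply are strictly negative when the corresponding constraints are not binding.

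The key step is to feed these two vanishing multipliers into the gradient condition $(\ref{gra2})$. Substituting $\lambda_{2i}=\lambda_{3i}=0$ leaves $-\frac{\lambda_{1i}Wk_i\tau_i}{\tau_i+k_i\alpha_i}=0$. Here I would first record the preliminary fact that $\tau_i>0$ in any optimal, hence feasible, solution: if $\tau_i=0$, the data constraint $(\ref{datarate2})$ cannot be met for a positive demand $D_i$. Consequently $W$, $k_i$, and $\tau_i$ are all strictly positive and the denominator $\tau_i+k_i\alpha_i$ is strictly positive, so the coefficient multiplying $\lambda_{1i}$ is nonzero. This forces $\lambda_{1i}=0$.

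Finally, with $\lambda_{1i}=\lambda_{2i}=\lambda_{3i}=0$, the remaining gradient condition $(\ref{gra1})$ collapses to $1=0$, a contradiction. Hence at least one of the constraints $(\ref{pmax2})$ and $(\ref{energyharvesting2})$ must be active at the optimum, which is precisely the claim.

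I expect the only real obstacle to be the justification of the positivity facts that render the coefficient in $(\ref{gra2})$ nonzero, namely $\tau_i>0$ and $\tau_i+k_i\alpha_i>0$, since these are exactly what allow the single equation $(\ref{gra2})$ to pin down $\lambda_{1i}=0$ and thereby trigger the same collapse of $(\ref{gra1})$ used in Lemma~\ref{lemma:data_cons}. Everything else is routine bookkeeping with complementary slackness and the gradient conditions and requires no new computation.
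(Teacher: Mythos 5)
Your proof is correct and takes essentially the same route as the paper's treatment of the main case: complementary slackness conditions (\ref{comp2}) and (\ref{comp3}) force $\lambda_{2i}=\lambda_{3i}=0$, and then the gradient conditions (\ref{gra2}) and (\ref{gra1}) collapse to a contradiction. In fact your version is slightly more careful than the paper's, since you justify why the coefficient of $\lambda_{1i}$ in (\ref{gra2}) is nonzero (namely $\tau_i>0$ for positive demand $D_i$, so $\tau_i+k_i\alpha_i>0$), a point the paper passes over with the blanket claim that ``either Eq. (\ref{gra1}) or Eq. (\ref{gra2}) is violated.'' The one difference is scope: the paper reads ``either\dots or'' exclusively and therefore adds a second case, arguing that (\ref{pmax2}) and (\ref{energyharvesting2}) cannot \emph{both} bind, because together with the data constraint (\ref{datarate2}) (tight by Lemma \ref{lemma:data_cons}) they would form an overdetermined system in the two unknowns $(\alpha_i,\tau_i)$. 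You omit this case, but your proof still establishes the lemma under its natural inclusive reading (at least one of the two constraints is tight), and that inclusive version is all that is actually invoked downstream: the proof of Theorem \ref{thm:opt_power} only needs that the optimum is either $P_i^*$ (energy constraint tight) or $P_{max}$ (power constraint tight) to conclude $P_i=\min\lbrace P_i^*, P_{max}\rbrace$, and Lemma \ref{lemma_pen1} likewise uses only the inclusive statement. So your proposal is a complete proof of the lemma as stated; the paper's second case buys the strictly stronger exclusive claim, whose ``overdetermined system'' justification is itself somewhat loose (such a system can be consistent in degenerate parameter configurations, e.g., when $P_i^*$ happens to equal $P_{max}$).
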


\begin{proof}
Suppose that none of the constraints $(\ref{pmax2})$ and $(\ref{energyharvesting2})$ are binding. Then, $\lambda_{2i}=0$ and $\lambda_{3i}=0$ by Eqs. (\ref{comp2}) and (\ref{comp3}). Then, either Eq. (\ref{gra1}) or Eq. (\ref{gra2}) is violated for any $\lambda_{1i}\geq 0$, which violates KKT conditions. On the other hand, suppose that both constraints $(\ref{pmax2})$ and $(\ref{energyharvesting2})$ are binding. Then, due to Lemma \ref{lemma:data_cons}, these two constraints together with constraint $(\ref{datarate2})$ specify an overdetermined system of equations for which a solution does not exist. This is a contradiction.
\end{proof}

Now, we provide the optimal power control and time allocation for $\cal{PCP}$.

\begin{theorem} \label{thm:opt_power}
In the optimal solution of $\cal{PCP}$, transmit power $P_i$ of a user $i$ is given by

\begin{equation} \label{eq:opt_pow_1} \small
P_i = \min \left\lbrace P_{max} ,  \left(	\frac{-1}{k_i} - \frac{W \varepsilon_i}{D_i \ln 2}	{\cal{W}} ( \frac{-D_i \ln 2 }{W k_i \varepsilon_i} e^{\upsilon_i} ) \right) \right\rbrace,
\end{equation}
where ${\cal{W}}(\cdot)$ is the Lambert function \cite{ex25},
\begin{equation} \label{eq:opt_pow_2} \small
\varepsilon_i = B_i + C_i \sum_{j=1}^{i-1} \tau_j,
\end{equation}
\begin{equation} \label{eq:opt_pow_3} \small
\upsilon_i = \frac{-C_i D_i \ln 2}{W \varepsilon_i} - \frac{D_i \ln 2}{W k_i \varepsilon_i}.
\end{equation}
Then, the optimal time allocation $\tau_i$ of user $i$ is given by

\begin{equation} \label{eq:opt_pow_4} \small
\tau_i = D_i / \left( W \log_2 (1 + k_i P_i)\right).
\end{equation}

\end{theorem}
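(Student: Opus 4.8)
The plan is to build directly on the two structural facts already established for $\cal{PCP}$. By Lemma \ref{lemma:data_cons} the data constraint (\ref{datarate2}) is tight in any optimum, so $D_i = W\tau_i\log_2(1+k_i\alpha_i/\tau_i) = W\tau_i\log_2(1+k_iP_i)$, which immediately yields the time allocation (\ref{eq:opt_pow_4}) once $P_i$ is known; the whole task therefore reduces to determining $P_i$. By Lemma \ref{lemma:power_cons}, exactly one of the maximum-power constraint (\ref{pmax2}) and the energy-causality constraint (\ref{energyharvesting2}) is active, so I would split the analysis into these two exhaustive, mutually exclusive cases and show that each produces one of the two branches inside the $\min\{\cdot,\cdot\}$ of (\ref{eq:opt_pow_1}).

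When the energy-causality constraint (\ref{energyharvesting2}) is binding, I would write $\alpha_i = B_i + C_i\sum_{j=1}^{i}\tau_j = \varepsilon_i + C_i\tau_i$ with $\varepsilon_i$ as in (\ref{eq:opt_pow_2}) and use $\alpha_i = P_i\tau_i$ to get $\tau_i = \varepsilon_i/(P_i-C_i)$. Substituting this into the tight data constraint gives the transcendental equation $\frac{D_i\ln 2}{W\varepsilon_i}(P_i-C_i) = \ln(1+k_iP_i)$. The key step is the substitution $u \triangleq 1+k_iP_i$ together with $a \triangleq \frac{D_i\ln 2}{Wk_i\varepsilon_i}$, after which the equation rearranges into the canonical form $(-au)e^{-au} = -a\,e^{-a(1+k_iC_i)}$. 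Applying the defining identity ${\cal{W}}(z)e^{{\cal{W}}(z)} = z$ gives $-au = {\cal{W}}(-a\,e^{-a(1+k_iC_i)})$, and undoing the substitution yields $P_i = -\frac{1}{k_i} - \frac{1}{ak_i}{\cal{W}}(\cdot)$. It then remains to verify the bookkeeping that $-a(1+k_iC_i)$ equals $\upsilon_i$ in (\ref{eq:opt_pow_3}) and that $\frac{1}{ak_i} = \frac{W\varepsilon_i}{D_i\ln 2}$, which reproduces exactly the Lambert-function branch of (\ref{eq:opt_pow_1}); call this value $\widehat{P}_i$.

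When instead the maximum-power constraint (\ref{pmax2}) is binding, $\alpha_i = \tau_iP_{max}$ gives $P_i = P_{max}$, the other branch. To merge the two cases into the single expression (\ref{eq:opt_pow_1}), I would first show by monotonicity that the net consumption $(P_i-C_i)\tau_i = \frac{D_i(P_i-C_i)}{W\log_2(1+k_iP_i)}$ is strictly increasing in $P_i$ (which follows from $\ln(1+x) > x/(1+x)$ for $x>0$), so that the energy-causality constraint is equivalent to $P_i \le \widehat{P}_i$, with equality exactly at $P_i = \widehat{P}_i$. Then, invoking Lemma \ref{lemma:power_cons}: if the energy constraint binds we have $P_i = \widehat{P}_i$ while slackness of (\ref{pmax2}) forces $\widehat{P}_i < P_{max}$, and if (\ref{pmax2}) binds we have $P_i = P_{max}$ while slackness of the energy constraint forces $P_{max} < \widehat{P}_i$. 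In both cases $P_i = \min\{P_{max}, \widehat{P}_i\}$, matching (\ref{eq:opt_pow_1}); this is also the intuitively expected outcome, since a larger $P_i$ shortens $\tau_i$. Substituting the resulting $P_i$ into the tight data constraint finally delivers (\ref{eq:opt_pow_4}).

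I expect the algebraic reduction to Lambert-function form, and in particular the selection of the correct real branch of ${\cal{W}}(\cdot)$, to be the main obstacle. The argument $-a\,e^{\upsilon_i}$ is negative and lies in $[-1/e,0)$, where ${\cal{W}}$ admits two real branches, so I would need to argue, using $P_i > 0$ (hence $u > 1$) and the resulting sign and magnitude of $-au$, that the branch producing a positive and feasible transmit power is the intended one. The remaining pieces --- the case split and the monotonicity argument --- are routine once Lemmas \ref{lemma:data_cons} and \ref{lemma:power_cons} are in hand.
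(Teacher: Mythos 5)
Your proposal is correct and takes essentially the same route as the paper's proof: Lemmas \ref{lemma:data_cons} and \ref{lemma:power_cons} reduce everything to the two binding-constraint cases, the energy-binding case is solved in closed form via the Lambert function (your substitution $u=1+k_iP_i$ reproduces the paper's derivation, which works with $\tau_i$ instead, via $-au=\upsilon_i - D_i\ln 2/(W\tau_i)$), and the two cases are merged into $P_i=\min\lbrace P_{max},\widehat{P}_i\rbrace$ by the same monotone-energy-consumption argument (your version, using the net consumption $(P_i-C_i)\tau_i$, is in fact slightly more precise than the paper's appeal to the gross energy $P_i\tau_i$ from Lemma \ref{lemma_tau}). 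The Lambert-branch issue you flag is genuine, but the paper's proof silently ignores it (its ${\cal W}(\cdot)$ must be read as the $W_{-1}$ branch, since the $W_{0}$ root corresponds to $u<1$, i.e., a negative transmit power), so your plan to resolve it via $P_i>0$ is, if anything, more careful than the published argument.
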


\begin{proof}
By Lemmas \ref{lemma:data_cons} and \ref{lemma:power_cons}, either constraints $(\ref{datarate2})$ and $(\ref{pmax2})$ or $(\ref{datarate2})$ and $(\ref{energyharvesting2})$ are satisfied with equality in an optimal solution. Let $P_i^*$ be the transmit power satisfying constraints $(\ref{datarate2})$ and $(\ref{energyharvesting2})$ with equality. Note that $P_i^*$ may satisfy or violate constraint $(\ref{pmax2})$. Then,
\begin{equation} \label{1}\small
\displaystyle D_i-W\tau_i\log_2\Big(1+k_i\alpha_i/\tau_i\Big) = 0 
\end{equation}
\begin{equation} \label{2}\small
\displaystyle \alpha_i-B_i-C_i\sum_{j=1}^{i-1}\tau_j-C_i \tau_i = 0
\end{equation}
where $\alpha_i = P_i^* \tau_i$. Define $\varepsilon_i = B_i + C_i \sum_{j=1}^{i-1} \tau_j$. Then, Eq. (\ref{2})  can be rearranged as
\begin{equation} \label{3} \small
\displaystyle \alpha_i=\varepsilon_i+C_i\tau_i
\end{equation}
Inserting Eq. (\ref{3}) into Eq. (\ref{1}), Eq. (\ref{1}) can be rearranged as
\begin{equation} \label{4} \small
\displaystyle e^{\bigg(\dfrac{D_i \ln 2}{W\tau_i}\bigg)}=1+k_i(\varepsilon_i+C_i\tau_i)/ \tau_i
\end{equation}
We can represent Eq. (\ref{4}) in the form $Y=X e^X$ as
\begin{equation} \label{5} \small
\displaystyle \dfrac{-D_iln2}{Wk_i \varepsilon_i}e^{\upsilon_i}=\bigg(\dfrac{-D_i ln2}{W\tau_i}+\upsilon_i\bigg)e^{\bigg(\dfrac{-D_i ln2}{W\tau_i}+\upsilon_i\bigg)}
\end{equation}
where $\upsilon_i$ is given by Eq. (\ref{eq:opt_pow_3}). Solution to $Y=X e^X$ is $X={\cal{W}}(Y)$. Then, omitting some steps for brevity, $\tau_i$ is obtained as
\begin{equation} \label{7} \small
\displaystyle \tau_i=-1/ \left( 
\frac{1}{k_i\varepsilon_i}+\frac{C_i}{\varepsilon_i}-\frac{W}{D_i \ln 2}{\cal{W}}(\frac{-D_i \ln 2}{Wk_i\varepsilon_i}e^{\upsilon_i})\right) 
\end{equation}
and since $P_i^*=\dfrac{\varepsilon_i}{\tau_i}+C_i$ from Eq. (\ref{3}), $P_i^*$ is 
\begin{equation} \small
\small P_i^* = 	\frac{-1}{k_i} - \frac{W \varepsilon_i}{D_i \ln 2}	{\cal{W}} ( \frac{-D_i \ln 2 }{W k_i \varepsilon_i} e^{\upsilon_i} ) 
\end{equation}
Since the energy consumed during transmission of a fixed amount of data is a monotonically increasing function of the transmit power, as given in the proof of Lemma $\ref{lemma_tau}$,  any power allocation $P_i > P_i^*$ violates the energy causality constraint $(\ref{energyharvesting2})$; whereas, any power allocation $P_i \leq P_i^*$ satisfies it; i.e., $P_i^*$  is the maximum transmit power satisfying the energy causality constraint. Then, if $P_i^*$ is feasible considering constraint $(\ref{pmax2})$; i.e., $P_i^* \leq P_{max}$, $P_i=P_i^*$ is optimal. Otherwise, if $P_i^*$ violates constraint $(\ref{pmax2})$; i.e., $P_i^* > P_{max}$, since an optimal solution should satisfy either constraints $(\ref{datarate2})$ and $(\ref{pmax2})$ or $(\ref{datarate2})$ and $(\ref{energyharvesting2})$ with equality and the latter case does not yield a feasible power allocation, constraint $(\ref{pmax2})$ is satisfied with equality in the optimal solution. Hence, $P_i=P_{max}$ is optimal. Therefore, optimal power allocation $P_i$ is given by the minimum of $P_i^*$ and $P_{max}$; i.e., $P_i=\min\lbrace P_i^*, P_{max}\rbrace$. Besides, by Lemma \ref{lemma:data_cons}, the optimal time allocation $\tau_i$ is given by $\tau_i = D_i / \left( W \log_2 (1 + k_i P_i)\right)$.
\end{proof}

\section{Scheduling}  \label{sec:scheduling}%{Optimal Scheduling}

The goal of this section is to determine the optimal schedule; i.e., the transmission order of the users, in order to minimize the length of the schedule. In Section \ref{sec:power}, the optimal time allocation and power control have been determined for a given schedule. However, optimizing the schedule can further decrease the schedule length.  
A straightforward solution to find the optimal schedule would be a brute-force search algorithm that enumerates all possible orderings of the users and then determines the one with the minimum length. However, such an algorithm has exponential complexity, which makes it computationally intractable for even a medium size network. Hence, fast and scalable solutions are required. 

In this section, we propose two polynomial-time heuristic algorithms and an exponential-time optimal algorithm with significantly reduced complexity. Next, we first investigate the optimality conditions for a minimum length schedule and then present the algorithms.

Let $s_i$ and $e_i(s_i)$ be the starting and ending time of the transmission of user $i$, respectively.

\begin{definition} \label{def_pen}
The penalty function of user $i$, $\rho_i(s_i): [0, \infty) \rightarrow [\rho_i^{max}, 0]$, is defined as the difference between the actual transmission time and the minimum possible transmission time, and formulated as $\rho_i(s_i)= e_i(s_i) - s_i - t_i^{min}$, where $t_i^{min}$ is the minimum possible transmission time of user $i$ corresponding to maximum transmit power $P_i= P_{max}$ and $\rho_i^{max} = e_i(0) - t_i^{min}$ is the maximum penalty for user $i$ when it is scheduled first.
\end{definition}

\begin{figure}[t] 
 \centering
\includegraphics[width= 0.65 \linewidth]{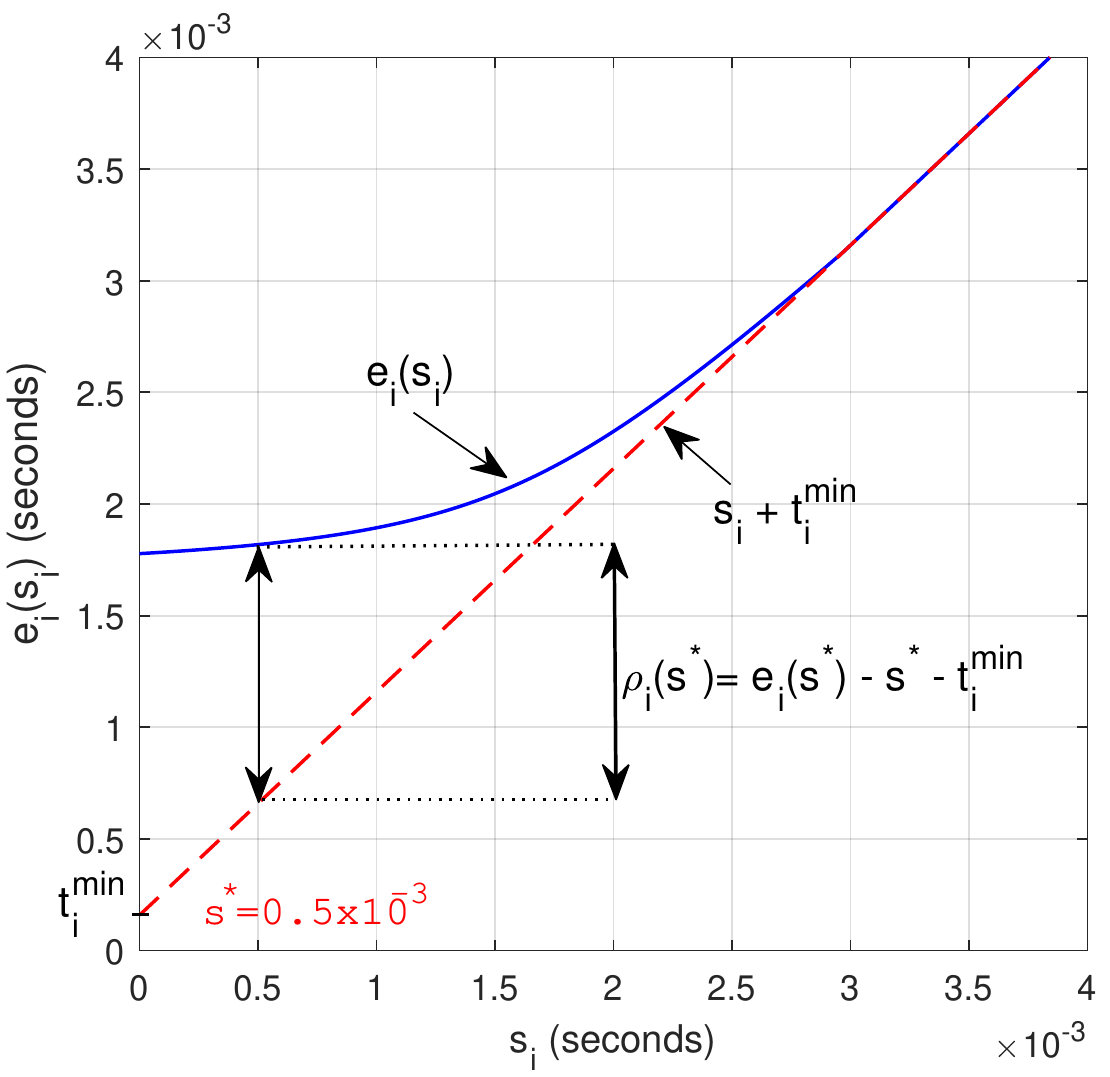}
\caption{Illustration of penalty function $\rho_i(s_i)$. $s_i+t_i^{min}$ is the lower bound for $e_i(s_i)$ while $e_i(s_i)$ becomes equal to $s_i+t_i^{min}$ for the first time when user $i$ can afford to transmit with $P_i=P_{max}$.} \label{fig:pen1}
\end{figure}

Penalty function $\rho_i(s_i)$ of user $i$ is illustrated in Fig. \ref{fig:pen1}. $\rho_i(s_i)$ decreases as a function of $s_i$ until it becomes $0$ when user $i$ can afford to transmit with $P_i=P_{max}$ for the first time. This monotonicity characteristic of the penalty function is stated in the following lemma.

\begin{lemma} \label{lemma_pen1}
The penalty function $\rho_i (s_i)$ is a non-increasing function of $s_i$.
\end{lemma}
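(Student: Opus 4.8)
The plan is to reduce the claim to a monotonicity statement about the transmission \emph{duration}, and then to the monotonicity of the optimal transmit power. Since $t_i^{min}$ is a constant independent of $s_i$, writing $\tau_i(s_i) = e_i(s_i) - s_i$ for the actual transmission duration gives $\rho_i(s_i) = \tau_i(s_i) - t_i^{min}$, so it suffices to show that $\tau_i(s_i)$ is non-increasing in $s_i$. By Theorem~\ref{thm:opt_power}, a transmission starting at time $s_i$ has optimal duration $\tau_i(s_i) = D_i/\big(W\log_2(1+k_iP_i(s_i))\big)$, which is a strictly decreasing function of the optimal transmit power $P_i(s_i)$. Hence the whole claim reduces to showing that $P_i(s_i)$ is non-decreasing in $s_i$.

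Next I would make the dependence on $s_i$ explicit. Because $\tau_0=0$ (Lemma~\ref{lemma_tau}), the start time equals the total time elapsed before user $i$ transmits, so the energy available to user $i$ at the onset of its slot is $\varepsilon_i = B_i + C_i s_i$, which is strictly increasing in $s_i$. By Theorem~\ref{thm:opt_power}, $P_i(s_i) = \min\{P_i^*(\varepsilon_i), P_{max}\}$, where $P_i^*(\varepsilon_i)$ is the largest power satisfying the energy-causality constraint~(\ref{energyharvesting2}) with equality. Since the minimum of a non-decreasing function and a constant is non-decreasing, it is enough to prove that $P_i^*(\varepsilon_i)$ is non-decreasing in $\varepsilon_i$.

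For this last step I would avoid the explicit Lambert-function expression and argue through feasibility directly. For a fixed payload $D_i$, transmitting at power $P$ takes time $\tau(P)=D_i/(W\log_2(1+k_iP))$ and, by constraint~(\ref{energyharvesting2}), is feasible from starting energy $\varepsilon_i$ if and only if $g(P)\le\varepsilon_i$, where $g(P) \triangleq (P-C_i)\,\tau(P)$ is the net energy drawn from the battery. The key computation is to show that $g$ is strictly increasing: differentiating gives $g'(P) \propto \log_2(1+k_iP) - (P-C_i)\frac{k_i}{(1+k_iP)\ln 2}$, and the derivative of this bracket vanishes only at $P=C_i$, where the bracket itself still evaluates to the nonnegative quantity $\log_2(1+k_iC_i)$; hence $g'(P)\ge 0$ for all $P\ge 0$, strictly so for $C_i>0$. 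Consequently the feasible powers form an interval $[0,P_i^*(\varepsilon_i)]$ with $g(P_i^*(\varepsilon_i))=\varepsilon_i$, and since $g$ is a strictly increasing bijection on the relevant range, $P_i^*(\varepsilon_i)=g^{-1}(\varepsilon_i)$ is non-decreasing in $\varepsilon_i$. Chaining the implications—$\varepsilon_i$ increasing in $s_i$, $P_i^*$ non-decreasing in $\varepsilon_i$, $P_i=\min\{P_i^*,P_{max}\}$ non-decreasing, and $\tau_i$ decreasing in $P_i$—yields that $\tau_i(s_i)$, and therefore $\rho_i(s_i)$, is non-increasing, which is exactly the behavior anticipated by Corollary~\ref{corollary_tau} and Fig.~\ref{fig:pen1}.

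The main obstacle is the monotonicity of $g$ (equivalently, of $P_i^*$ in $\varepsilon_i$): the energy drawn from the battery trades off two competing effects—a higher power costs more energy per unit time but shortens the slot and thus reduces the harvested contribution—so the sign of $g'$ is not obvious a priori and must be pinned down by the derivative computation above. A fully equivalent alternative is to apply the implicit function theorem to the binding equation $g(P_i^*)=\varepsilon_i$, giving $dP_i^*/d\varepsilon_i = 1/g'(P_i^*)\ge 0$; I would nonetheless prefer the feasibility-interval argument, since it also cleanly covers the saturation regime where $P_i$ is clipped at $P_{max}$ and the penalty has already reached $0$, keeping $\rho_i(s_i)$ flat rather than decreasing there.
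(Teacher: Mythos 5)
Your proof is correct, and it follows the same overall route as the paper's: split the behavior into an energy-limited regime and a $P_{max}$-capped regime, and argue that a later start time means more harvested energy, hence a higher optimal transmit power, hence a shorter slot and a smaller penalty. The difference is in how the crux is handled. The paper's proof invokes Lemma \ref{lemma:power_cons} and then asserts the chain ``harvested energy increases $\Rightarrow$ energy consumed increases $\Rightarrow$ transmit power increases''; as you correctly point out, this chain is not self-evident, because the energy consumed, $\varepsilon_i + C_i\tau_i$, itself depends on the slot length being determined, so the argument is implicitly circular unless one pins down the sign of the tradeoff between higher per-second consumption and reduced in-slot harvesting. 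Your proof closes exactly this gap: feasibility of a power level $P$ is $g(P)=(P-C_i)\tau(P)\leq\varepsilon_i$, and the derivative computation showing $g'\geq 0$ (the bracket attains its minimum at $P=C_i$, where it equals $\log_2(1+k_iC_i)\geq 0$) is the substantive fact the paper leaves unproven. Incidentally, the same conclusion follows even faster from a fact the paper already uses in the proof of Lemma \ref{lemma_tau}: $g(P)=P\tau(P)-C_i\tau(P)$ is the sum of $P\tau(P)$, the energy needed to deliver $D_i$ bits at power $P$, which is increasing in $P$, and $-C_i\tau(P)$, which is increasing because $\tau(P)$ is decreasing; hence $g$ is increasing with no new calculus. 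Either way, your version is the more rigorous one: it both proves the monotone-power step and, via $P_i=\min\{P_i^*(\varepsilon_i),P_{max}\}$, cleanly covers the saturation regime where the penalty stays flat at $0$, which the paper handles separately through the definition of $s_i'$.
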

\begin{proof}
By Lemma \ref{lemma:power_cons}, any user $i$ should either transmit with $P_i=P_{max}$ or consume all its energy during transmission. Let $s_i^{'}$ be the earliest time instant at which user $i$ transmits with $P_{max}$. For $s_i<s_i^{'}$, user $i$ consumes all its energy during transmission. As $s_i$ increases up to $s_i^{'}$, since the harvested energy increases, the energy consumed by the user increases so the transmit power increases up to $P_{max}$. Therefore, the transmission time decreases, resulting in a decrease in the penalty function. On the other hand, for any starting time $s_i\geq s_i^{'}$, the user transmits with $P_{max}$. Since the transmission time is equal to the minimum possible transmission time $t_i^{min}$ at transmit power $P_{max}$, the penalty is equal to $0$ for $s_i\geq s_i^{'}$.
\end{proof}

\begin{lemma} \label{lemma_equal}
For $\cal{MLSP}$, the objective of minimizing the schedule length $ \left( min\sum_{i=1}^{N}\tau_i\right) $ is equivalent to minimizing the sum of the penalties $ \left( min\sum_{i=1}^{N} \rho_i(s_i)\right) $.
\end{lemma}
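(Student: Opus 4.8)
The plan is to show that the two objective functions differ only by an additive constant that does not depend on any scheduling, power or time-allocation decision, so that any configuration minimizing one necessarily minimizes the other. First I would unwind the definition of the penalty function from Definition \ref{def_pen}. For every user $i$ with starting time $s_i$, the quantity $e_i(s_i)-s_i$ is by construction exactly the transmission time $\tau_i$ of user $i$ whose slot begins at $s_i$. Hence the penalty rewrites as $\rho_i(s_i)=e_i(s_i)-s_i-t_i^{min}=\tau_i-t_i^{min}$.

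Next I would sum this identity over all $N$ users to obtain $\sum_{i=1}^{N}\rho_i(s_i)=\sum_{i=1}^{N}\tau_i-\sum_{i=1}^{N}t_i^{min}$. Since $\tau_0=0$ by Lemma \ref{lemma_tau} and the slots are placed back-to-back starting at $t=0$, the first sum on the right-hand side is precisely the schedule length appearing in the objective (\ref{obj}). The crux of the argument is then to establish that $\sum_{i=1}^{N}t_i^{min}$ is a fixed constant. Each $t_i^{min}$ is the minimum possible transmission time of user $i$, attained at $P_i=P_{max}$, namely $t_i^{min}=D_i/\bigl(W\log_2(1+k_iP_{max})\bigr)$; this depends only on the per-user data demand $D_i$, the channel coefficient $k_i$ and the global cap $P_{max}$, and not on the starting time $s_i$ nor on the ordering of the users. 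Consequently the two objectives differ by the fixed offset $\sum_{i=1}^{N}t_i^{min}$, and minimizing $\sum_{i=1}^{N}\rho_i(s_i)$ is equivalent to minimizing $\sum_{i=1}^{N}\tau_i$, which is the claim.

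I expect the only delicate point to be this independence claim: one must verify that $t_i^{min}$ is a genuine per-user constant, i.e. the idealized transmission time obtained when the user is never energy-constrained and always transmits at $P_{max}$, so that shifting every schedule's cost by the common amount $\sum_{i=1}^{N}t_i^{min}$ cannot alter the $\arg\min$ over schedules. The monotonicity established in Lemma \ref{lemma_pen1} supports this decomposition, since it guarantees $\rho_i(s_i)\ge 0$ with $t_i^{min}$ serving as a uniform lower bound on $\tau_i$ that is independent of $s_i$; this makes the offset well-defined and confirms that the penalty formulation is a faithful, constant-shifted surrogate for the schedule-length objective.
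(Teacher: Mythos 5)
Your proposal is correct and takes essentially the same route as the paper's proof: both rewrite $\rho_i(s_i)=e_i(s_i)-s_i-t_i^{min}=\tau_i-t_i^{min}$, observe that $\sum_{i=1}^{N}t_i^{min}$ is a constant independent of the scheduling decisions, and drop it from the objective. Your only addition is making the constancy of $t_i^{min}$ explicit via $t_i^{min}=D_i/\bigl(W\log_2(1+k_iP_{max})\bigr)$, which the paper simply asserts.
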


\begin{proof}
By definition of the penalty function, 
\begin{equation} \label{penalty_}\small
min\sum_{i=1}^{N} \rho_i(s_i) = min\sum_{i=1}^{N} e_i(s_i) - s_i - t_i^{min} 
\end{equation}
Since $t_i^{min}$ is constant for any user $i$, $t_i^{min}$ can be removed from the objective function. Then, since $\tau_i = e_i(s_i) - s_i$, the objective function in Eq. (\ref{penalty_}) is reduced to $min\sum_{i=1}^{N} \tau_i$.
\end{proof}

\begin{theorem} \label{thm_pen}
If $\rho_i(0)=0$ for user $i$, then there exists an optimal solution to $\cal{MLSP}$ in which user $i$ is scheduled first with transmission time $\tau_i=t_i^{min}$.
\end{theorem}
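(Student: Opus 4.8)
The plan is an interchange argument: I would start from an arbitrary optimal schedule and show that user $i$ can be moved to the first position without increasing the schedule length, which immediately yields an optimal solution of the desired form. Two monotonicity facts drive the argument: by Lemma \ref{lemma_pen1} each transmission time $\tau_j(s_j)$ is non-increasing in the start time $s_j$, and by Corollary \ref{corollary_tau} the completion time $e_j(s_j)=s_j+\tau_j(s_j)$ is non-decreasing in $s_j$.

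First I would record what the hypothesis buys us. Since the penalty is non-negative by definition and non-increasing by Lemma \ref{lemma_pen1}, the assumption $\rho_i(0)=0$ forces $\rho_i(s_i)=0$ for every $s_i\geq 0$; equivalently, user $i$ always completes in exactly $t_i^{min}$, so neither its transmission time nor its penalty depends on where it is placed. In particular, if $i$ is already scheduled first we are done, so I would assume it sits in some position $k\geq 2$ of an optimal order $\pi_1,\dots,\pi_N$ (with $i=\pi_k$) and form the new order $i,\pi_1,\dots,\pi_{k-1},\pi_{k+1},\dots,\pi_N$, keeping the relative order of the remaining users fixed.

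For the users $\pi_1,\dots,\pi_{k-1}$ that previously preceded $i$, I would prove by induction on the position that each starts no earlier in the new schedule than in the old one. The base case holds because $\pi_1$ now starts at $t_i^{min}\geq 0$ rather than at $0$, and the inductive step is precisely Corollary \ref{corollary_tau}: a later start of $\pi_m$ propagates to a later completion, hence a later start for $\pi_{m+1}$. Combining with the first monotonicity fact, each $\tau_{\pi_m}$ with $m\leq k-1$ does not increase.

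The main obstacle is the block $\pi_{k+1},\dots,\pi_N$ originally scheduled after $i$: since the front block has become shorter, these users now start earlier and their individual penalties may actually rise, so a user-by-user penalty comparison fails. I would instead argue at the level of the completion time. A direct computation of the two prefixes gives $s_{\pi_{k+1}}^{new}-s_{\pi_{k+1}}^{old}=\sum_{m=1}^{k-1}\big(\tau_{\pi_m}^{new}-\tau_{\pi_m}^{old}\big)\leq 0$, so the trailing block starts no later in the new schedule. Because the completion time of the last scheduled user is obtained by composing the maps $e_j(\cdot)$ over the trailing block, each non-decreasing by Corollary \ref{corollary_tau}, it is a non-decreasing function of the time at which that block begins; hence the total schedule length does not increase (when $k=N$ the prefix computation alone gives the conclusion, as there is no trailing block). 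As the original schedule was optimal, the new one is optimal too, and it places $i$ first with $\tau_i=t_i^{min}$, which proves the claim.
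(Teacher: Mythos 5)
Your proposal is correct and follows essentially the same interchange argument as the paper's proof: move user $i$ to the front, use Lemma \ref{lemma_pen1} to show the preceding users' transmission times do not increase when delayed, and use Corollary \ref{corollary_tau} to propagate the non-increased completion time of that block through the trailing users so the overall length does not grow. Your write-up is in fact somewhat tighter than the paper's (the explicit prefix computation, the induction, and the $k=N$ edge case), but the underlying idea and the lemmas invoked are identical.
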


\begin{proof}
Suppose that there exists exactly one optimal schedule $S^*$ with length $L^*$ in which user $i$ with $\rho_i(0)=0$ is not scheduled 
first. Denote the set of users scheduled before and after user $i$ by $U_b$ and $U_a$, respectively, and the penalty of each user $j$ by    
$\rho_j^*$. Note that $\rho_i^*=0$ based on Lemma \ref{lemma_pen1}. Now, consider that schedule $S^*$ is updated such that user $i$ is scheduled first and the scheduling order of the other users remain the same. Denote the resulting schedule by $S^{'}$ with length $L^{'}$ and the penalty of each user $j$ in the schedule by   
$\rho_j^{'}$. 
The reallocation of user $i$ will delay the starting time of first scheduled user in $U_b$ by $t_i^{min}$. Then, by Corollary \ref{corollary_tau}, the starting time and the corresponding ending time of each user $k\in U_b$ will be delayed consecutively. Therefore,  for any user $k\in U_b$, $\rho_k^{'}\leq \rho_k^*$ since the penalty is a nonincreasing function of starting time as given by Lemma $\ref{lemma_pen1}$. Hence, $\sum_{n\in \lbrace U_b \cup i\rbrace} \rho_n^{'} \leq \sum_{n\in \lbrace U_b \cup i\rbrace} \rho_n^* $. Then, the transmissions of user $i$ and users in $U_b$ in schedule $S^{'}$ will be completed either earlier or at the same time compared to schedule $S^*$. Consequently, the starting time and the corresponding ending time of the transmission of a user $l\in U_a$ will either decrease or remain the same, by Corollary \ref{corollary_tau}. Hence, the schedule length which is equal to the ending time of the last scheduled user in $U_a$ will not increase; i.e., $L^{'} \leq L^*$. This is a contradiction.
\end{proof}

Theorem \ref{thm_pen} can be interpreted that at time $t=0$, it is optimal to schedule a user $i$ with zero penalty if such a user exists initially. Consequently, at any time $t>0$, after the completion of the ongoing transmission, it is still optimal to schedule a user with zero penalty among the remaining unallocated users since making a scheduling decision on minimizing the schedule length at time $t>0$ requires minimizing the sum of the penalties of the remaining unallocated users. Then, we have the following corollary of Theorem $\ref{thm_pen}$ presenting an optimal online scheduling policy.

\begin{corollary} \label{corollary_opt1}
For any scheduling policy, at any time instant $t$,
\begin{enumerate}
\item It is optimal to schedule a user $i$ with $\rho_i(t)=0$ next after the currently scheduled user completes its transmission.
\item It is optimal to schedule a user $i$ that can afford to start its transmission at time $t$ and complete it using maximum transmit power $P_i=P_{max}$ without violating the energy causality constraint, next after the currently scheduled user completes its transmission.
\end{enumerate}
\end{corollary}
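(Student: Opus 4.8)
The plan is to prove both parts by reducing each to an application of Theorem \ref{thm_pen} on a suitably defined \emph{continuation problem}, exploiting the fact, already implicit in Definition \ref{def_pen}, that the penalty function of a user depends only on its starting time and not on the identities of the users scheduled before it.

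First I would establish an optimal-substructure property. Fix any instant $t$ at which the currently scheduled user completes its transmission, and let $\mathcal{R}$ denote the set of users not yet allocated. Because energy transfer proceeds continuously at rate $C_i$ throughout the frame, independently of which user is transmitting, the energy available to a user $i\in\mathcal{R}$ at a prospective start time $s_i\geq t$ is $B_i+C_is_i$; hence its ending time $e_i(s_i)$ and its penalty $\rho_i(s_i)$ are functions of $s_i$ alone. Consequently, scheduling the users in $\mathcal{R}$ over the horizon $[t,\infty)$ is itself an instance of $\cal{MLSP}$: it suffices to reset the time origin to $t$ and replace each initial battery level $B_i$ by the harvested amount $B_i+C_it$. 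By Lemma \ref{lemma_equal}, minimizing the completion time of this continuation problem is equivalent to minimizing $\sum_{i\in\mathcal{R}}\rho_i(s_i)$, and the penalty of a user $i$ when scheduled \emph{next}, i.e. first in the continuation problem with start time $s_i=t$, is exactly $\rho_i(t)$.

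For part~1 I would then apply Theorem \ref{thm_pen} to this continuation instance: if some $i\in\mathcal{R}$ satisfies $\rho_i(t)=0$, the theorem guarantees an optimal solution of the continuation problem that schedules $i$ first, with transmission time $t_i^{min}$. Since the partial schedule already fixed on $[0,t)$ is unaffected, concatenating it with this optimal continuation yields a schedule that is optimal among all schedules sharing the given prefix; hence it is optimal to schedule $i$ next. For part~2 I would invoke Lemma \ref{lemma_pen1}, which characterizes $\rho_i(s_i)=0$ as precisely the condition that user $i$ can afford to transmit with $P_i=P_{max}$, equivalently complete its transmission in the minimum time $t_i^{min}$, when started at $s_i$. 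Thus a user that can start at $t$ and finish using $P_{max}$ without violating the energy causality constraint is exactly a user with $\rho_i(t)=0$, and part~2 follows immediately from part~1.

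The main obstacle I anticipate is justifying the optimal-substructure step rigorously, namely that an optimal tail of the continuation problem, concatenated with the fixed prefix, remains globally optimal. This rests on the separability of the penalties established above: since each $\rho_i(s_i)$ depends only on $s_i$, the schedule length decomposes, via Lemma \ref{lemma_equal}, into the prefix contribution plus a term governed solely by the continuation once the start time $t$ and the battery states $B_i+C_it$ are fixed, so the prefix and the tail can be optimized independently. Making this decomposition explicit, and confirming that the continuation problem inherits the data, maximum transmit power and energy causality constraints of $\cal{MLSP}$ under the shift of origin, is the only non-routine part; the remaining reductions are immediate from Theorem \ref{thm_pen} and Lemma \ref{lemma_pen1}.
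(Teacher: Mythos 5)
Your proposal is correct and follows essentially the same route as the paper: the paper derives this corollary directly from Theorem \ref{thm_pen} by observing that, after the ongoing transmission completes at time $t$, minimizing the schedule length reduces to minimizing the sum of penalties of the still-unallocated users, which is exactly your continuation-problem argument (with part 2 reduced to part 1 via the $\rho_i(t)=0 \Leftrightarrow P_i=P_{max}$ characterization from Lemma \ref{lemma_pen1}). The only difference is that you make the restart step rigorous — recasting the tail as a fresh instance of $\cal{MLSP}$ with batteries $B_i+C_i t$ and noting that $\rho_i$ depends on the start time alone — whereas the paper leaves this optimal-substructure claim informal.
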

Next, we introduce the scheduling algorithms based on the foregoing optimality analysis.

\vspace{-0.4cm}

\subsection{Minimum Penalty Algorithm}

\begin{algorithm} 
\caption{Minimum Penalty Algorithm (MPA)}  \label{algo_MPA}
\begin{algorithmic}[1] 
\small \STATEx \textbf{Input:} $\cal{F}$ \nonumber
\STATEx \textbf{Output:} $\cal{S}$, $t(\cal{S})$ \nonumber
\STATE $\cal{S}$ $\leftarrow$ $\emptyset$, $t(\cal{S})$ $\leftarrow$ 0,
\WHILE {$\textbf{F} \neq \emptyset$}
\STATE $k \leftarrow$ arg$\min_{i\in \cal{F}} \rho_i(t(\cal{S}))$,
\STATE $\cal{S}$ $\leftarrow$ $\cal{S}$ + $\lbrace k\rbrace$,
\STATE $\cal{F}$ $\leftarrow$ $\cal{F}$ - $\lbrace k \rbrace$,
\STATE $t(\cal{S})$ $\leftarrow$ $t(\cal{S})$ + $\rho_k(t(\cal{S}))$ + $t_k^{min}$,
\ENDWHILE
\end{algorithmic}
\end{algorithm}   

Minimum Penalty Algorithm (MPA) aims at minimizing the sum of the penalties of the users in a greedy manner based on Lemma \ref{lemma_equal}, as given in Algorithm \ref{algo_MPA}. Denote the schedule by $\cal{S}$, where the $i^{th}$ element of $\cal{S}$ is the user scheduled in the $i^{th}$ time slot. Let the schedule length be $t(\cal{S})$. Input of MPA algorithm is a set of energy harvesting users, denoted by $\cal{F}$. The algorithm starts by initializing the schedule $\cal{S}$ to an empty set and the schedule length $t(\cal{S})$ to $0$ (Line $1$). At each step of the algorithm, MPA picks the user with the minimum penalty among the unscheduled users (Line $3$). The current time slot is allocated to this minimum penalty user (Line $4$). Then, the scheduled user is discarded from set $\cal{F}$ (Line $5$) and the schedule length $t(\cal{S})$ is updated by adding the transmission time of the scheduled user (Line $6$). Algorithm terminates when all users in $\cal{F}$ are scheduled (Line $2$) and outputs the schedule $\cal{S}$ and the corresponding schedule length $t(\cal{S})$. The computational complexity of MPA is $\mathcal{O}(N^{2})$ for $N$ users since the algorithm picks one user at each iteration; i.e., $N$ iterations, and determines the minimum penalty user at each iteration, which has $\mathcal{O}(N)$ complexity. Fig. \ref{fig:pen3} illustrates MPA graphically for $2$ users.

\begin{figure}[t]
 \centering
\includegraphics[width= 0.65 \linewidth]{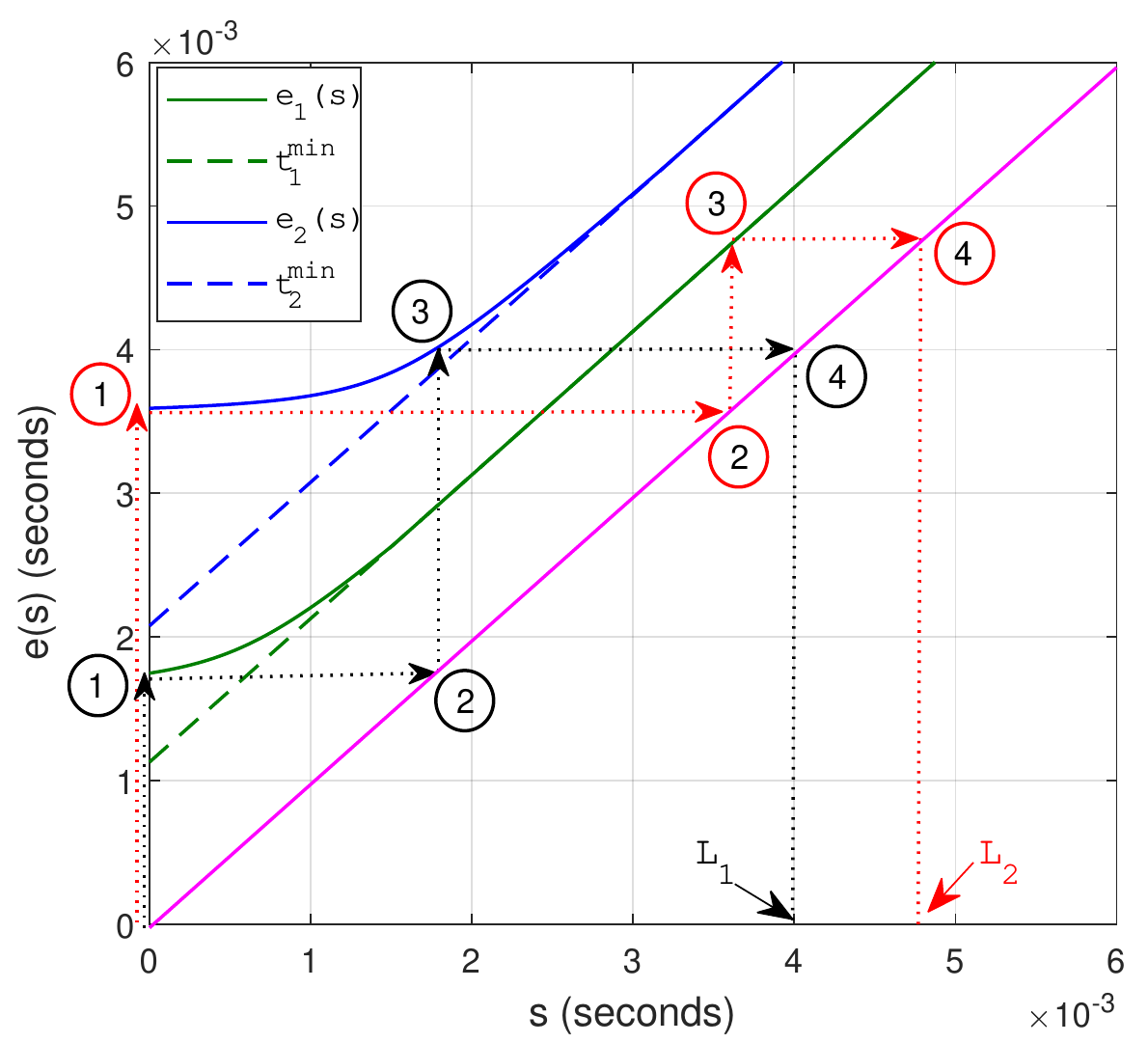}
\caption{Graphical illustration of MPA algorithm for $2$ users. Steps $1-4$ (dotted black line) depict the process of MPA. MPA first allocates user $1$ at $t=0$ since $\rho_1(0) < \rho_2(0)$. Then, after user $1$ completes its transmission at $t=e_1(0)$, user $2$ starts its transmission; i.e. $s_2 = e_1(0)$. The length of the resulting schedule is $L_1=e_2\left( e_1(0)\right) $. The alternative schedule, i.e., scheduling first user $2$ with a greater initial penalty and then user $1$ (dotted red line), would yield a greater schedule length $L_2>L_1$.}  \label{fig:pen3}
\end{figure}

\vspace*{-5mm}
\subsection{Maximum Transmit Power Algorithm}

\begin{algorithm}
\caption{Maximum Transmit Power Algorithm (MTPA) } \label{algo_MTPA}
\begin{algorithmic}[1]
\small \STATEx \textbf{Input:} $\cal{F}$
\STATEx \textbf{Output:} $\cal{S}$, $t(\cal{S})$
\STATE $\cal{S}$ $\leftarrow$ $\emptyset$, $t(\cal{S})$ $\leftarrow$ 0,
\WHILE {$\textbf{F} \neq \emptyset$}
\STATE determine optimal power $P_i$ for all $i\in \cal{F}$ at time $t(\cal{S})$,
\STATE $k \leftarrow$ arg$\max_{i\in \cal{F}} P_i$,
\STATE $\cal{S}$ $\leftarrow$ $\cal{S}$ + $\lbrace k\rbrace$,
\STATE $\cal{F}$ $\leftarrow$ $\cal{F}$ - $\lbrace k \rbrace$,
\STATE $t(\cal{S})$ $\leftarrow$ $t(\cal{S})$ + $\tau_k$,
\ENDWHILE
\end{algorithmic}
 \end{algorithm} 

Maximum Transmit Power Algorithm (MTPA) picks the user that can afford maximum feasible transmit power among all users, based on Corollary \ref{corollary_opt1}, where we show that allocating a user that can feasibly afford $P_{max}$ at any time instant is optimal. MTPA is given in Algorithm \ref{algo_MTPA}. Input of MTPA algorithm is a set of users denoted by $\cal{F}$ while the output is the schedule $\cal{S}$ and the corresponding schedule length $t(\cal{S})$. The algorithm starts by initializing $\cal{S}$ to an empty set and $t(\cal{S})$ to $0$ (Line $1$). At each step of the algorithm, optimal transmit power of each user $i$ is determined by Eq. (\ref{eq:opt_pow_1}) (Line $3$) and the user with maximum power is allocated to the current time slot with duration $\tau_i$ given by Eq. (\ref{eq:opt_pow_4}) (Lines $4-5$). Then, the scheduled user is discarded from set $\cal{F}$ (Line $6$) and the schedule length $t(\cal{S})$ is updated by adding the transmission time of the scheduled user (Line $7$). Algorithm terminates when all users in $\cal{F}$ are scheduled (Line $2$). The computational complexity of MTPA is $\mathcal{O}(N^{2})$ for $N$ users since the algorithm picks one user at each iteration; i.e., $N$ iterations, and determines the user with maximum transmit power at each iteration, which has $\mathcal{O}(N)$ complexity.

\vspace*{-5mm}
\subsection{Fast Pruning Algorithm} \label{sec:FPA}

In this section, we propose an exact and computationally-efficient enumeration algorithm incorporating efficient pruning mechanisms based on the analysis presented in Theorem \ref{thm_pen} and the following corollaries, as given in Algorithm \ref{algo_FPA}.

\begin{algorithm}
\caption{Fast Pruning Algorithm (FPA) }\label{algo_FPA}
\begin{algorithmic}[1]
\small \STATEx \textbf{Input:} $\cal{F}$
\STATEx \textbf{Output:} $\cal{S^*}$, $t^*$
\STATE $\cal{N}$ $\leftarrow$ $\lbrace \lbrace 1\rbrace ,  \lbrace 2\rbrace , ... ,   \lbrace \cal{F} \rbrace  \rbrace$, $t^*$ $\leftarrow$ $\infty$
\WHILE {$\cal{N}$ $\neq \emptyset$}
\STATE determine $s_{max}$ of $\cal{N}$,
\STATE ${\cal{N}}_{s_{max}}$ $\leftarrow$ the set of nodes in $\cal{N}$ with degree $s_{max}$,
\IF {$s_{max} = |\cal{F}|$}
\STATE $n_{max}$ $\leftarrow$ the node with size $s_{max}$,
\IF {$t(n_{max}) < t^*$}
\STATE $t^* \leftarrow t(n_{max})$,
\STATE $\cal{S^*}$ $\leftarrow$ $n_{max}$,
\ENDIF
\STATE discard node $n_{max}$ from $\cal{N}$,
\STATE continue,
\ENDIF
\STATE $n_{min}$ $\leftarrow$ arg$\min_{n\in {\cal{N}}_{s_{max}}} \rho(n)$,
\IF {$\rho(n_{min})=0$}
\STATE prune all nodes in $\cal{N}$ with size $s_{max}$ and same ascendant node with $n_{min}$,
\ENDIF
\IF {$t(n_{min}) \geq t^*$}
\STATE prune node $n_{min}$,
\ELSE
\STATE generate set of children nodes ${\cal{C}}_{n_{min}}$ of node $n_{min}$,
\STATE $\cal{N}$ $\leftarrow$ $\cal{N}$ + ${\cal{C}}_{n_{min}}$,
\ENDIF
\STATE discard node $n_{min}$ from $\cal{N}$,
\ENDWHILE
\end{algorithmic}
\end{algorithm}

Consider a tree model with $|\cal{F}|$ levels, where $\cal{F}$ is the set of users to be scheduled. Each level $i$ of the tree specifies the user allocated in the $i^{th}$ time slot of the schedule. A branch of the tree, consisting of one user from each level of the tree, corresponds to one feasible schedule for $\cal{F}$. A node $n$ of the tree at the $i^{th}$ level specifies the set of $i$ users allocated in the first $i$ slots of the schedule. Size $s(n)$ of a node $n$ is defined as the number of scheduled users specified by $n$; i.e., a node at the $i^{th}$ level of the tree has size $i$. For instance, a node $n=\left\lbrace 2, 3 \right\rbrace $ of the tree at the $2^{nd}$ level specifies that users $2$ and $3$ are allocated in the first and second slots of the schedule, respectively, and has size $s(n)=2$. Note that maximum size for a node is $|\cal{F}|$ corresponding to a branch; i.e., a feasible schedule consisting of all users. We further define the penalty $\rho(n)$ and the transmission length $t(n)$ of a node $n$ as the penalty of the last scheduled user by $n$ and the sum of the transmission times of the users scheduled by $n$, respectively.

The aim of the Fast Pruning Algorithm (FPA) is to determine the optimal schedule without generating all possible branches of the tree; i.e. without generating all feasible schedules. Two pruning mechanisms are employed to decrease the search space for the feasible schedules in this regard. Unless a node of the tree is pruned out by the algorithm via these mechanisms, it is branched into children nodes, each corresponding to the allocation of a new user to the already scheduled users by the ascendant node. First pruning mechanism is based on Corollary \ref{corollary_opt1}. If the penalty of a node is $0$, all children nodes of its ascendant are pruned out since the other nodes originating from the same ascendant cannot yield a better schedule than that particular node. Second one is pruning out the nodes which cannot end up in a minimum length schedule. If the transmission length of a node is greater than or equal to the current minimum schedule length, that particular node is pruned out since adding new users to a node will only increase its transmission length.

Input of the FPA algorithm is a set of users denoted by $\cal{F}$, while the output is the schedule $\cal{S}^{*}$ and the corresponding schedule length $t^{*}$. FPA algorithm keeps track of the set of nodes $\cal{N}$ of the tree that are not evaluated so far. The algorithm starts by initializing $\cal{N}$ to set $\lbrace \lbrace 1\rbrace ,  \lbrace 2\rbrace , ... ,   \lbrace \cal{F} \rbrace  \rbrace$, containing all nodes in the $1^{st}$ level of the tree, each corresponding to the allocation of one user in $\cal{F}$ to the $1^{st}$ time slot and $t^{*}$ to $\infty$ (Line $1$). Let $s_{max}$ be the largest size of a node in $\cal{N}$ and ${\cal{N}}_{s_{max}}$ be the set of nodes in $\cal{N}$ with size $s_{max}$. FPA determines $s_{max}$ and ${\cal{N}}_{s_{max}}$ at each iteration (Lines $3-4$). Unless $\cal{N}$ includes a node with size $|\cal{F}|$; i.e., a branch, FPA picks the node with the minimum penalty in ${\cal{N}}_{s_{max}}$, denoted by $n_{min}$ (Line $14$). If the penalty of node $n_{min}$ is $0$ (Line $15$), then all nodes with the same size $s_{max}$; i.e., having same ascendant node with $n_{min}$, are pruned out (Line $16$). If the transmission length of node $n_{min}$ is greater than or equal to the current minimum schedule length $t^{*}$ (Line $18$), then $n_{min}$ is also pruned out (Line $19$). Otherwise, $n_{min}$ is branched into its children nodes, denoted by set ${\cal{C}}_{n_{min}}$ (Line $21$), which is added to set $\cal{N}$ (Line $22$). Finally, $n_{min}$ is discarded from $\cal{N}$ since its evaluation is completed (Line $24$). If, at a particular iteration, $\cal{N}$ includes a node with maximum size $|\cal{F}|$ (Line $5$), then the node $n_{max}$ with degree $s_{max}$ (Line $6$) is a branch and specifies a feasible schedule. Therefore, algorithm evaluates whether it outperforms the current best feasible schedule $S^*$, and updates $S^*$ and the corresponding current minimum schedule length $t^*$ if the transmission length of this node is less than $t^{*}$ (Lines $7-10$). Then, $n_{max}$ is discarded from $\cal{N}$ (Line $11$) and the algorithm continues with the next iteration (Line $12$). FPA terminates when all nodes in $\cal{N}$ are evaluated; i.e., $\cal{N}=\emptyset$ (Line $2$). Fig. \ref{fig:FPA} illustrates FPA algorithm for $4$ users, through an example.

\begin{figure}[t]
 \centering
\includegraphics[width= 0.65 \linewidth]{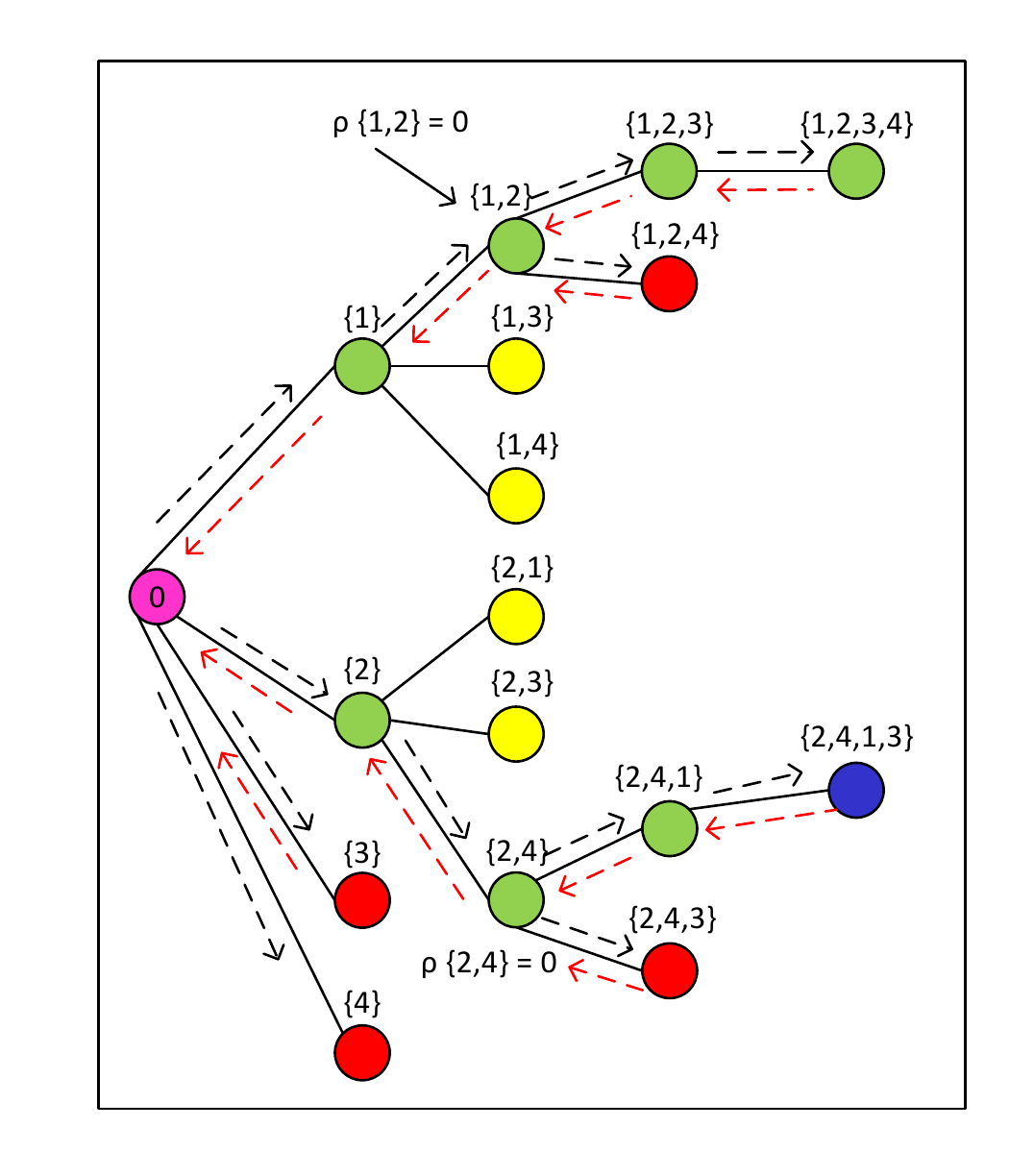}
\caption{Graphical illustration of FPA algorithm for $4$ users. Algorithm starts by generating the $1^{st}$ level nodes $\lbrace 1 \rbrace$, $\lbrace 2 \rbrace$, $\lbrace 3 \rbrace$, and $\lbrace 4 \rbrace$. The green circles represent the nodes evaluated and not pruned by FPA. The yellow circles represent the pruned nodes having the same ascendant with a zero penalty node while the red circles represent the pruned nodes that cannot yield a better schedule compared to the current best feasible schedule determined by FPA. Blue circle denotes the optimal schedule yielded by FPA after all nodes are either evaluated or pruned out. The nodes are evaluated by FPA in the following order: $\lbrace 1 \rbrace\rightarrow \lbrace 1,2 \rbrace\rightarrow \lbrace 1,2,3 \rbrace\rightarrow \lbrace 1,2,3,4 \rbrace\rightarrow \lbrace 1,2,4 \rbrace\rightarrow \lbrace 2 \rbrace\rightarrow \lbrace 2,4 \rbrace\rightarrow \lbrace 2,4,1 \rbrace\rightarrow \lbrace 2,4,1,3 \rbrace\rightarrow \lbrace 2,4,3 \rbrace\rightarrow \lbrace 3 \rbrace\rightarrow \lbrace 4 \rbrace$.}  \label{fig:FPA}
\end{figure}

\begin{theorem} \label{thm:FPA}
FPA algorithm determines an optimal solution for $\cal{MLSP}$.
\end{theorem}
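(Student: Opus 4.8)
The plan is to show that FPA is an exact traversal of the tree of all feasible schedules in which \emph{neither} pruning rule can ever discard every optimal branch. A complete, unpruned traversal is simply a brute-force enumeration of all user orderings, and each leaf's length is computed from the optimal power and time allocation of Theorem \ref{thm:opt_power}; hence the shortest leaf is a global optimum of $\cal{MLSP}$, and it suffices to prove that the two pruning mechanisms are \emph{safe}. I would formalize this with a loop invariant, checked at the top of the \textbf{while} loop: letting $L^*$ denote the optimal schedule length, either $t^*=L^*$, or there is a node in $\cal{N}$ whose subtree still contains a branch of length $L^*$.

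Initialization is immediate, since $t^*=\infty$ and $\cal{N}$ contains every level-$1$ node, so the first user of any optimal branch is present. For the inductive step I would examine each way $\cal{N}$ is modified. Branching $n_{min}$ and replacing it by its children preserves the invariant trivially, as any optimal branch through $n_{min}$ passes through one of the added children. Evaluating a full branch $n_{max}$ (Lines 5--12) only lowers $t^*$ to a genuinely feasible length, so $t^*\geq L^*$ is maintained and, if $n_{max}$ is itself optimal, then $t^*=L^*$ afterwards.

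The two substantive cases are the prunings. For the length-bound pruning (discard $n_{min}$ when $t(n_{min})\geq t^*$): because $t(n)$ is the sum of transmission times of the scheduled prefix, every completion of $n_{min}$ has length at least $t(n_{min})\geq t^*\geq L^*$; hence if an optimal branch ran through $n_{min}$ we must already have $t^*=L^*$, and the invariant survives. For the zero-penalty pruning (when $\rho(n_{min})=0$, discard the siblings of $n_{min}$, i.e.\ the other children of the common ascendant $A$): here I would invoke Corollary \ref{corollary_opt1}, which states that scheduling a zero-penalty user next is optimal at any time instant. This yields that the minimum length over all branches extending $A$ equals the minimum length over those branches extending $A$ whose next user is the zero-penalty user realized by $n_{min}$. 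Consequently, if an optimal branch passes through $A$ (necessarily via some sibling), then a branch of the same optimal length survives inside the retained subtree rooted at $n_{min}$; and if no optimal branch passes through $A$, the pruned siblings were irrelevant. Either way the invariant holds.

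Finally, since the search tree has finitely many nodes, each node is generated at most once, and every iteration permanently removes a node from $\cal{N}$, the loop terminates with $\cal{N}=\emptyset$; at that point the live-subtree disjunct is vacuous, forcing $t^*=L^*$ with $\cal{S}^*$ the corresponding schedule, which is the claim. I expect the main obstacle to be the zero-penalty pruning: one must argue carefully that keeping only the zero-penalty child $n_{min}$ among the children of $A$ cannot sever an optimal completion, which is precisely where Corollary \ref{corollary_opt1}---resting on Lemma \ref{lemma_equal} and the monotonicity in Lemma \ref{lemma_pen1}---does the essential work, rather than any property of the enumeration order itself.
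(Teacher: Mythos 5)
Your proof is correct and follows essentially the same route as the paper's: both rest on (i) the fact that the tree enumeration reaches every feasible schedule unless one of its ascendant nodes is pruned, (ii) safety of the length-bound pruning because prefix length can only grow under completion, and (iii) safety of the zero-penalty pruning via Corollary \ref{corollary_opt1}. Your loop-invariant packaging is in fact slightly more careful than the paper's argument, which asserts that a pruned node ``cannot yield an optimal solution'' (imprecise when there are ties in schedule length), whereas your invariant correctly guarantees only that \emph{some} optimal branch, or an already-recorded incumbent with $t^*=L^*$, survives each iteration.
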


\begin{proof}
We will first show that FPA generates all possible schedules for the set of users $\cal{F}$ unless a node of the tree is pruned out. Consider any node $n=\lbrace u_1, u_2,..., u_{|\cal{F}|}\rbrace$ with size $s(n)= |\cal{F}|$ corresponding to a feasible schedule for the set of users $\cal{F}$, where $u_i$ denotes the user allocated in the $i^{th}$ time slot of the schedule represented by node $n$. The ascendant of node $n$, say $n^p$, is the node $n^p=\lbrace u_1, u_2,..., u_{{|\cal{F}|}-1}\rbrace$ with size $s(n^p)={|\cal{F}|}-1$ and corresponds to the allocation of first ${|\cal{F}|}-1$ users scheduled by node $n$. Then, by the same logic, one can iteratively determine the ascendants of $n^p$ up to the node $n^i=\lbrace u_1\rbrace$ with size $s(n^i)=1$ corresponding to the allocation of only the first user scheduled by node $n$. Node $n^i$ is one of the nodes in the initially specified set $\cal{N}$ by FPA algorithm (Line $1$). This guarantees the generation of the schedule represented by node $n$ by FPA unless one of the ascendants $\lbrace n^i,...,n^p\rbrace$ of node $n$ is pruned out. Furthermore, since the branch of the tree originating from node $n^i$ and consisting of the successive ascendants of node $n$ up to $n^i$ is uniquely determined, the schedule represented by node $n$ is generated only once. On the other hand, any node $n$ pruned out by FPA cannot yield an optimal solution since it is pruned either due to the existence of a better feasible schedule determined by FPA so far (Lines $18-20$) or because one node having the same ascendant with $n$ has $0$ penalty and hence yields a better schedule (Lines $15-17$). Thus, the optimal schedule will be one of the nodes generated by FPA. This completes the proof.
\end{proof}

Note that the proposed FPA algorithm is not a conventional branch and bound method. The basic idea of a standard branch and bound method is to partition the non-convex feasible region into convex sets and finding the upper and lower bounds on the global optimum iteratively based on the solution of the convex relaxations of the integer programming problems. The method terminates when the upper and lower bounds reach a sufficiently close neighborhood of each other. On the other hand, the FPA algorithm is an enumeration algorithm that evaluates integral feasible solutions without performing any relaxation. Nevertheless, to improve the speed of convergence to the optimal solution, the FPA algorithm incorporates two pruning mechanisms, similar to the branching and bounding routines used in the branch and bound method. The first pruning mechanism comparing the solution of a feasible schedule to the upper bound on the optimal solution is similar to the one used in standard branch and bound. However, the other pruning mechanism exploiting the introduced penalty function is particularly designed for $\cal{MLSP}$. Hence, the FPA algorithm can be considered as a smart enumeration algorithm specifically designed for the investigated problem, exploiting the introduced novel penalty function for smart and fast pruning.

The computational complexity of FPA is $\mathcal{O}(N!)$ since it enumerates all possible schedules for $N$ users in the worst case as stated in the proof of Theorem \ref{thm:FPA}. In this respect, it is identical to the worst case complexity of standard branch and bound. However, as illustrated in Section \ref{sec:simulation}, FPA achieves a much better runtime performance in a practical setup than its worst case complexity due to the incorporated smart pruning mechanisms.

\vspace{-0.5cm}
\section{Performance Evaluation} \label{sec:simulation}
 
The goal of this section is to evaluate the performance of the proposed algorithms in comparison to the optimal solution and previously proposed algorithms. The previously proposed algorithm in \cite{harvest_30}, denoted by PCA, aims at minimizing the schedule length in a full-duplex system for a given transmission order of the users, without considering scheduling and maximum transmit power constraint. The algorithm determining the optimal time and power allocation for a predetermined transmission order of users based on Theorem \ref{thm:opt_power} is included for a fair comparison to PCA, denoted by OTPA. In order to illustrate the computational performance of the FPA algorithm, we also incorporate a brute-force algorithm, denoted by BFA, which evaluates the length of all possible transmission schedules and determines the one with minimum length using PCA to determine the length of each transmission schedule. Note that since FPA is proven to be optimal, the figures illustrating the schedule length performance, Figures $\ref{fig:pmax}-\ref{fig:beta}$, exclude BFA.

Simulation results are obtained by averaging $1000$ independent random network realizations. The users are uniformly distributed in a circle with radius of $10$m. The attenuation of the links considering large-scale statistics are determined using the path loss model given by 
$PL(d)=PL(d_0)+10\alpha log_{10}\bigg(\frac{d}{d_0}\bigg)+\emph{Z}$,
where $PL(d)$ is the path loss at distance $d$, $d_0$ is the reference distance, $\alpha$ is the path loss exponent, and $Z$ is a zero-mean Gaussian random variable with standard deviation $\sigma$. The small-scale fading has been modeled by using Rayleigh fading with scale parameter  $\Omega$ set to mean power level obtained from the large-scale path loss model. The parameters used in the simulations are $\eta_i=1$ for $i \in \{1,\dots, N\}$; $D_i=100$ bits for $i \in \{1,\dots, N\}$; $W= 1$ MHz; $d_0=1$ m; $PL(d_0)=30$ dB; $\alpha=2.76$, $\sigma=4$ \cite{harvest_07, harvest_04}, \cite{harvest_50}. The self interference coefficient $\beta$ is $-70$ dBm, the initial battery level of the users are $10^{-9}$ J, $P_{max}=1mW$ and $P_h=1W$ for the simulations, unless otherwise stated.

\begin{figure}[t] 
 \centering
\includegraphics[width= 0.65 \linewidth]{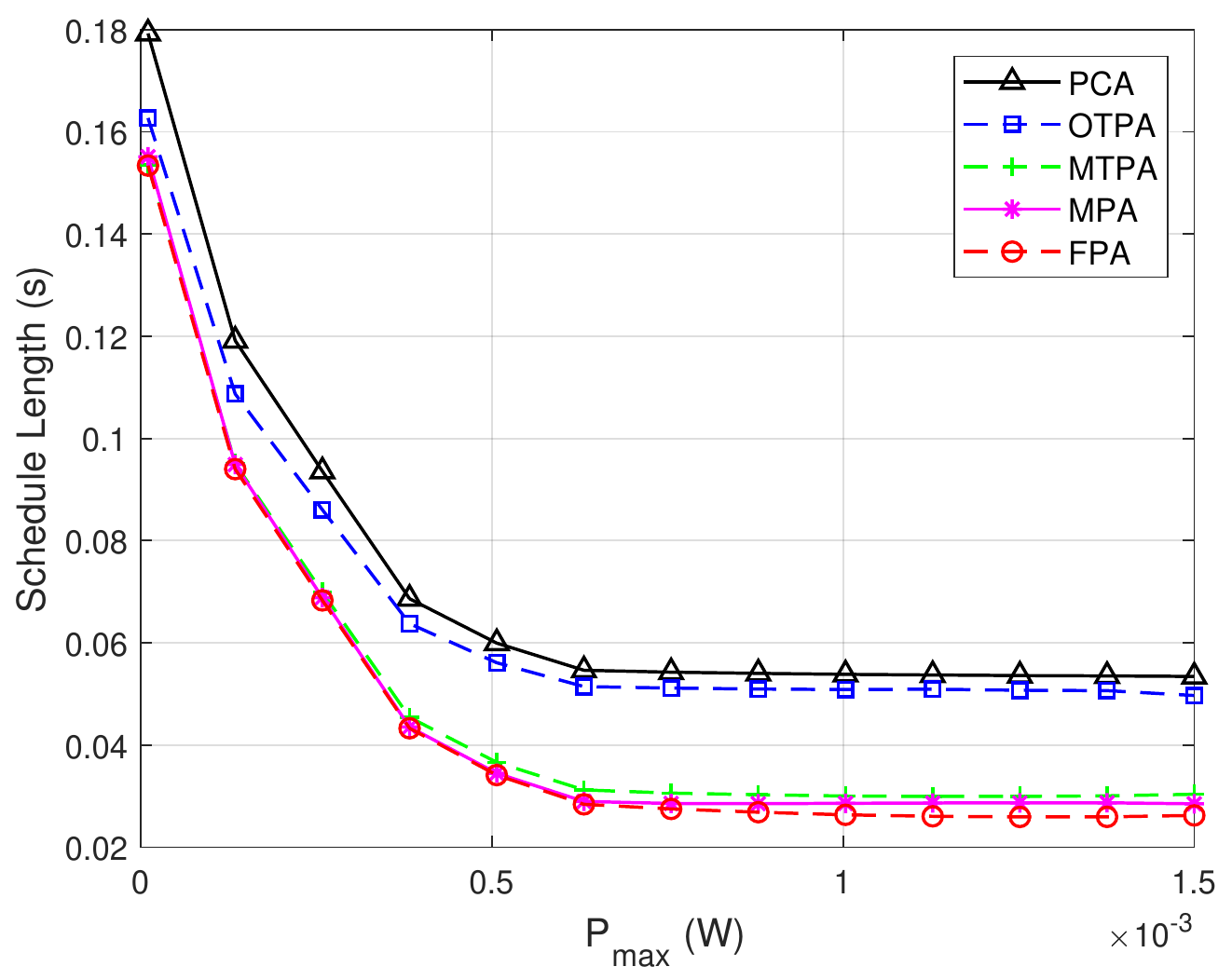}
\caption{Schedule length of the algorithms for different maximum transmit powers in a network of $10$ users and $P_h=30W$}. \label{fig:pmax}
\end{figure}

Fig. \ref{fig:pmax} illustrates the schedule length for different $P_{max}$ values in a network of $10$ users and $P_h=30W$. The proposed scheduling algorithms; i.e., MPA, MTPA and FPA, perform significantly better than the algorithms designed for a given scheduling order; i.e., PCA and OTPA, approximately yielding around $50\%$ performance improvement for practical $P_{max}$ values. The performance gain of scheduling increases as $P_{max}$ increases up to $0.6W$. This is due to the fact that as $P_{max}$ value increases, the first time instant at which a user has zero penalty increases necessitating a proper ordering of the users. Note that the scheduling performance of the algorithms saturates around $P_{max}=0.6W$. This is due to the fact that most users are bounded by the energy causality constraint around this level and cannot afford higher transmit powers even if $P_{max}$ further increases. Moreover, MPA and MTPA perform very close to optimal and show robustness against increasing $P_{max}$. In addition, MPA outperforms MTPA by approximately $5\%$, which is considerable given that both algorithms perform very close to optimal. This performance improvement of MPA over MTPA is due to the fact that the former directly aims at minimizing the completion time of the user transmissions through penalty minimization at each user allocation, while the latter considers maximizing the transmit powers, which may not correspond to picking users with minimum penalty since users have different energy harvesting and consumption characteristics. Note that the penalty function inherently takes data requirements, channel conditions and available energy into account, while the transmit power of a user is just a function of the energy available for a user at a time.

\begin{figure}[t]
 \centering
\includegraphics[width= 0.65 \linewidth]{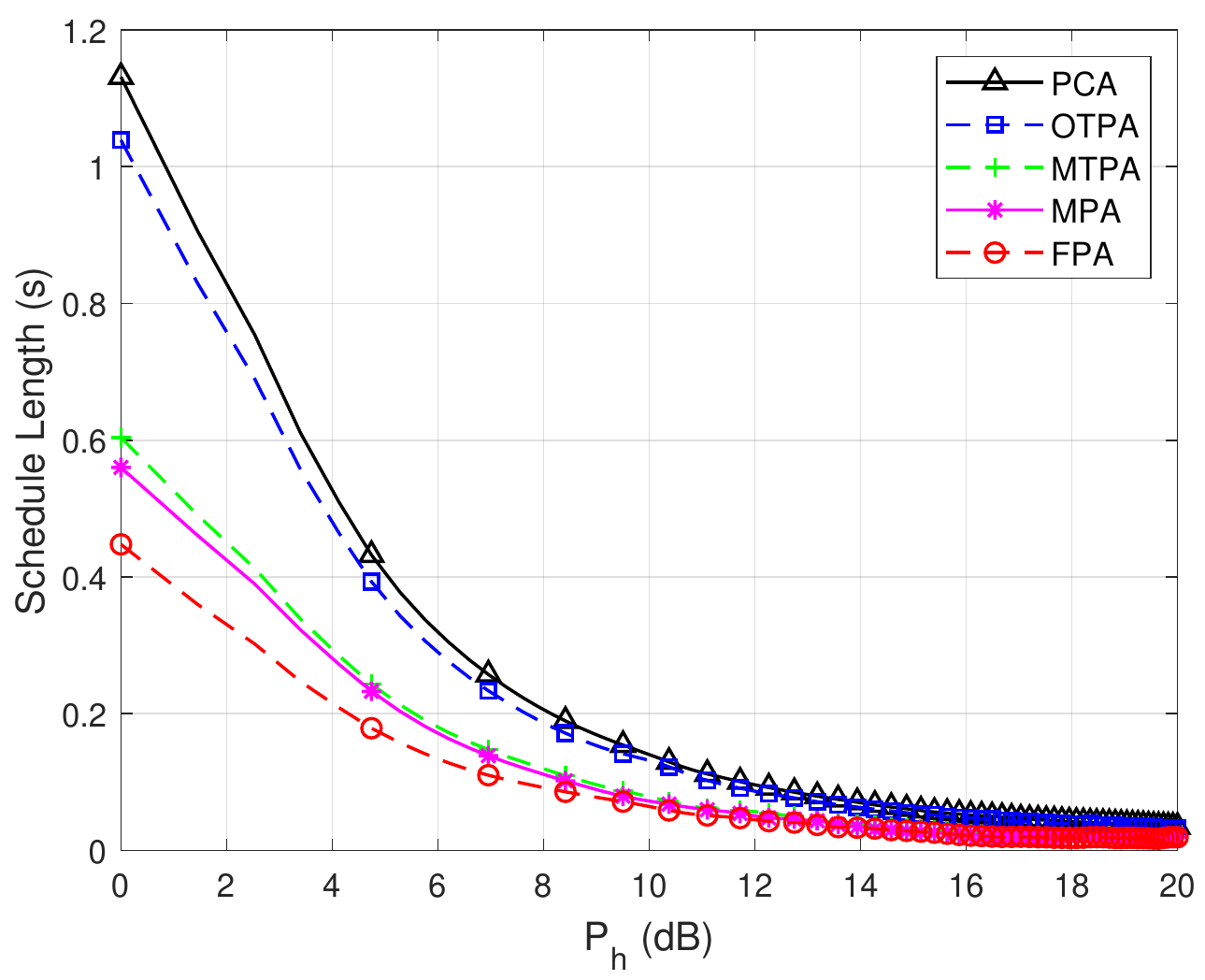}
\caption{Schedule length of the algorithms for different HAP transmit powers in a network of $10$ users.} \label{fig:ph}
\end{figure}

Fig. \ref{fig:ph} illustrates the schedule length for different HAP transmit power values in a network of $10$ users. The schedule length decreases with the increasing transmit power since higher HAP power allows users to harvest more energy and complete their transmission in shorter time as long as they can transmit with higher transmit powers. The proposed polynomial-time algorithms MPA and MTPA employing scheduling complete the transmission of all users in around $40\%$ and $50\%$ less time with respect to OTPA and PCA, respectively, for small $P_h$ values. Furthermore, MPA and MTPA achieve an approximation ratio of around $1.20$ and $1.25$, respectively, where approximation ratio is defined as the ratio of the scheduling length of the heuristic algorithms to the optimal schedule length yielded by FPA. On the other hand, for large $P_h$ values, around $20$ dB, the performance of the algorithms are very close to each other, since users can reach maximum transmit power $P_{max}$ values faster, which removes the necessity for scheduling: Each user $i$ transmits at high data rate and can complete its transmission in a time slot with length close to $\tau_i^{min}$. Similar to Fig. \ref{fig:pmax}, MPA and MTPA perform very close to optimal, both achieving an approximation ratio below $1.10$ for practical $P_h$ values, indicating that the algorithms can adapt their schedules based on the energy harvested by each user efficiently.

\begin{figure}[t] 
 \centering
\includegraphics[width= 0.65 \linewidth]{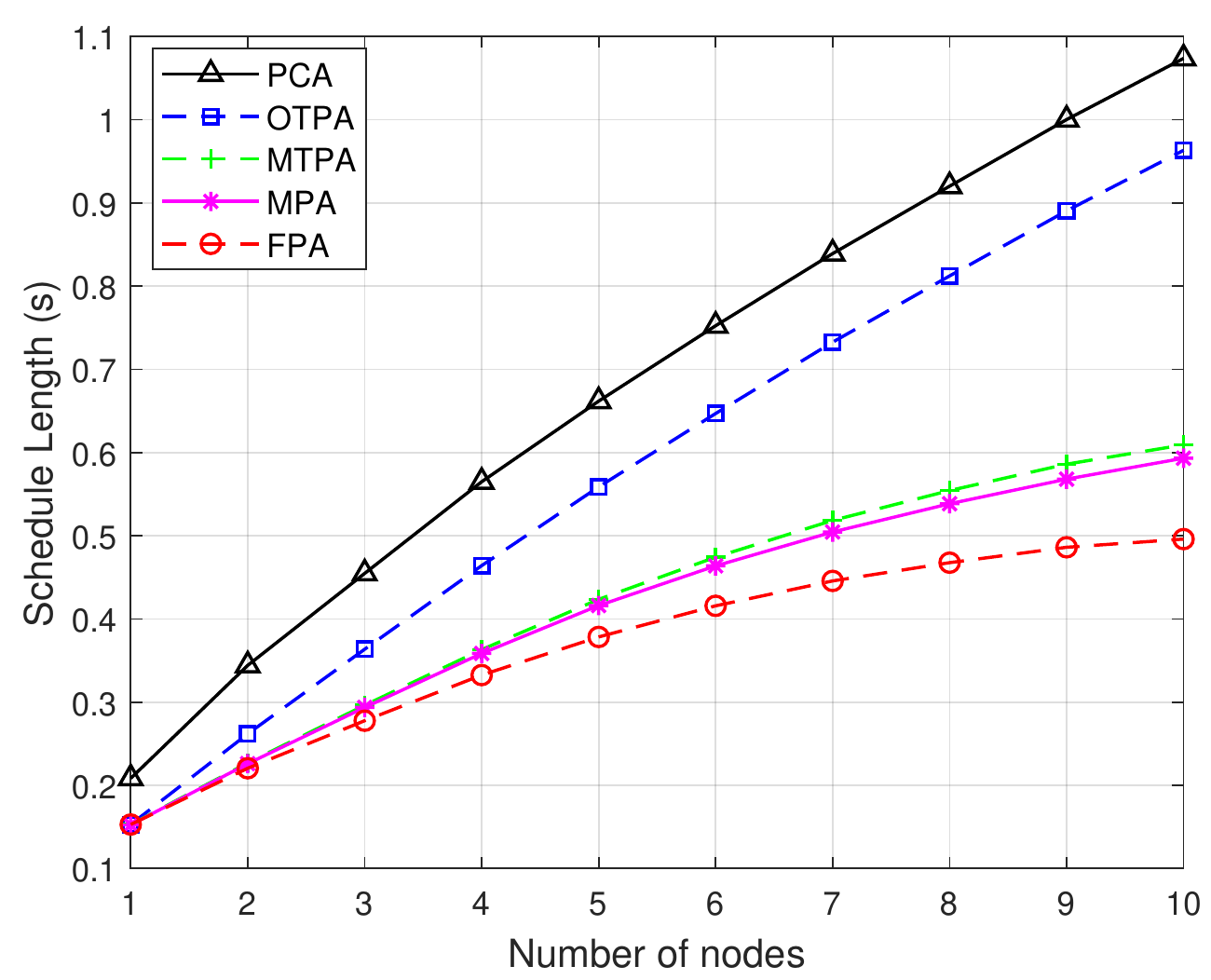}
\caption{Schedule length of the algorithms for different number of users.}\label{fig:size}
\end{figure}

Fig. \ref{fig:size} illustrates the effect of the network size on the performance of the proposed algorithms. For OTPA and PCA in which no scheduling is performed, the addition of each user increases the schedule length by almost a constant amount since increase in the schedule length is just caused by the time slot length of that particular user. On the other hand, for MTPA, MPA and FPA, each new user yields a diminishing increase in the schedule length since the transmission order may change by the addition of a user. For instance, a new user with good channel conditions, which can initially transmit with $P_{max}$, will be scheduled first and this will allow the other users to harvest more energy to be able to complete their transmissions in shorter time durations. Note that, as discussed in the previous sections, delaying the transmission of a user decreases its transmission time if its transmission power increases. Then, the increase in the schedule length will be less than the time slot length of the new user by the sum of penalty reductions of the other users. MPA and MTPA perform very close to optimal as the number of users increases, achieving approximation ratios of $1.16$ and $1.20$ at maximum, respectively, for $10$ users caused by the exponential nature of the problem complexity. Note that for $10$ users, there are $10!$ possible schedules among which MPA and MTPA determine only one in polynomial-time. However, their performance is still very robust to the network size since having larger number of users increases the probability of having at least one user with zero penalty or $P_{max}$ transmit power to be picked by MPA and MTPA at a time, respectively. 

\begin{figure}[t] 
 \centering
\includegraphics[width= 0.65 \linewidth]{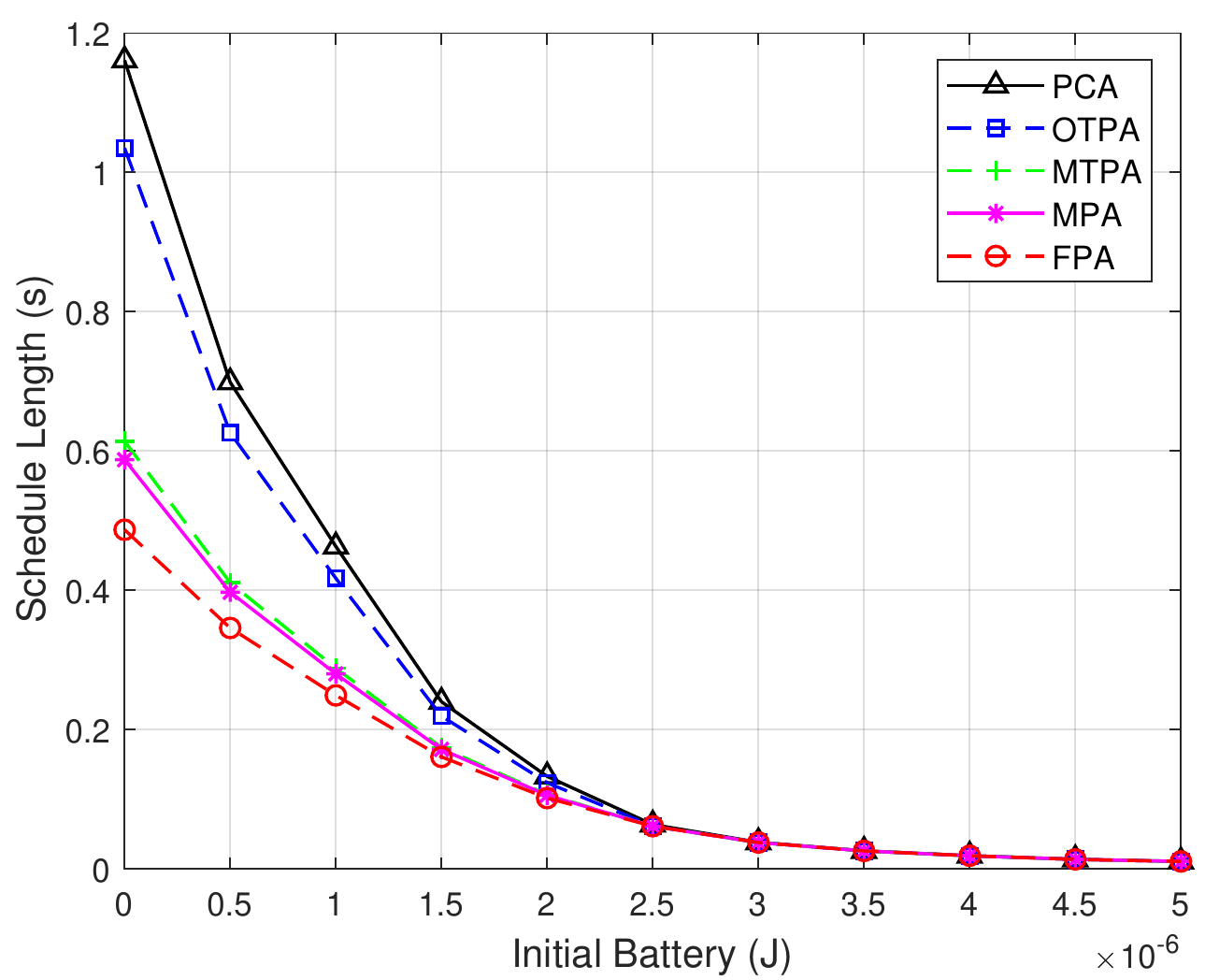}
\caption{Schedule length of the algorithms for different initial battery levels in a network of $10$ users.}\label{fig:battery}
\end{figure}

Fig. \ref{fig:battery} illustrates the schedule length for different initial battery levels in a network of $10$ users, assuming that each user has the same initial battery level. For low initial battery levels, the total transmission time of all users is approximately $45\%$ lower for the proposed heuristic scheduling algorithms MPA and MTPA than the pre-determined transmission order. However, as the initial battery level increases, the impact of the scheduling decreases due to the fact that higher initial battery level can allow more users to transmit information by using $P_{max}$. When the initial battery levels of all users are high enough, all the users can afford transmission at $P_{max}$, eliminating the effect of scheduling. The slight improvement in the performance of the OTPA over PCA is due to the energy harvesting capability of a user during its own transmission.

\begin{figure}[t] 
 \centering
\includegraphics[width= 0.65 \linewidth]{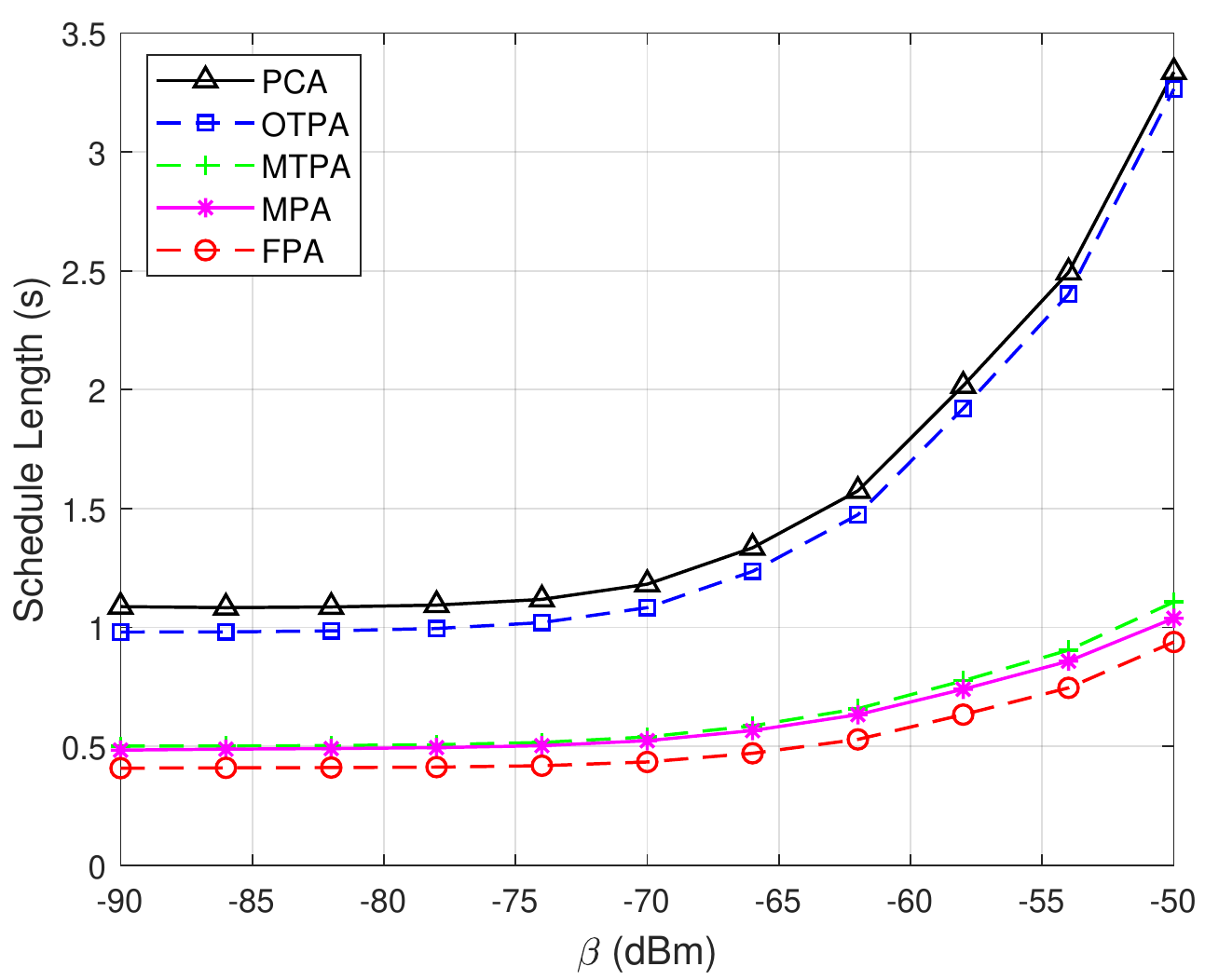}
\caption{Schedule length of the algorithms for different interference levels in a network of $10$ users.}\label{fig:beta}
\end{figure}
Fig. \ref{fig:beta} illustrates the schedule length for different values of the self interference coefficient $\beta$. For lower $\beta$ values, i.e. $-90$dBm, the impact of self interference is very low, resulting in a lower schedule length. However, as the self-interference level increases, the rate of increase in the schedule length increases, mainly due to the dominating effect of the power of self-interference over noise. Moreover, for all self-interference levels, the proposed scheduling algorithms perform very close to optimal and better than the predetermined transmission schemes by approximately $50\%$ and $60\%$ for low and high $\beta$ values, respectively. The difference between the proposed algorithms and fixed transmission order increases as the self-interference level increases, mainly due to decreasing transmission rates and correspondingly increasing penalty values, which in turn necessitates a proper ordering of the user transmissions.

\begin{figure}[t] 
 \centering
\includegraphics[width= 0.65 \linewidth]{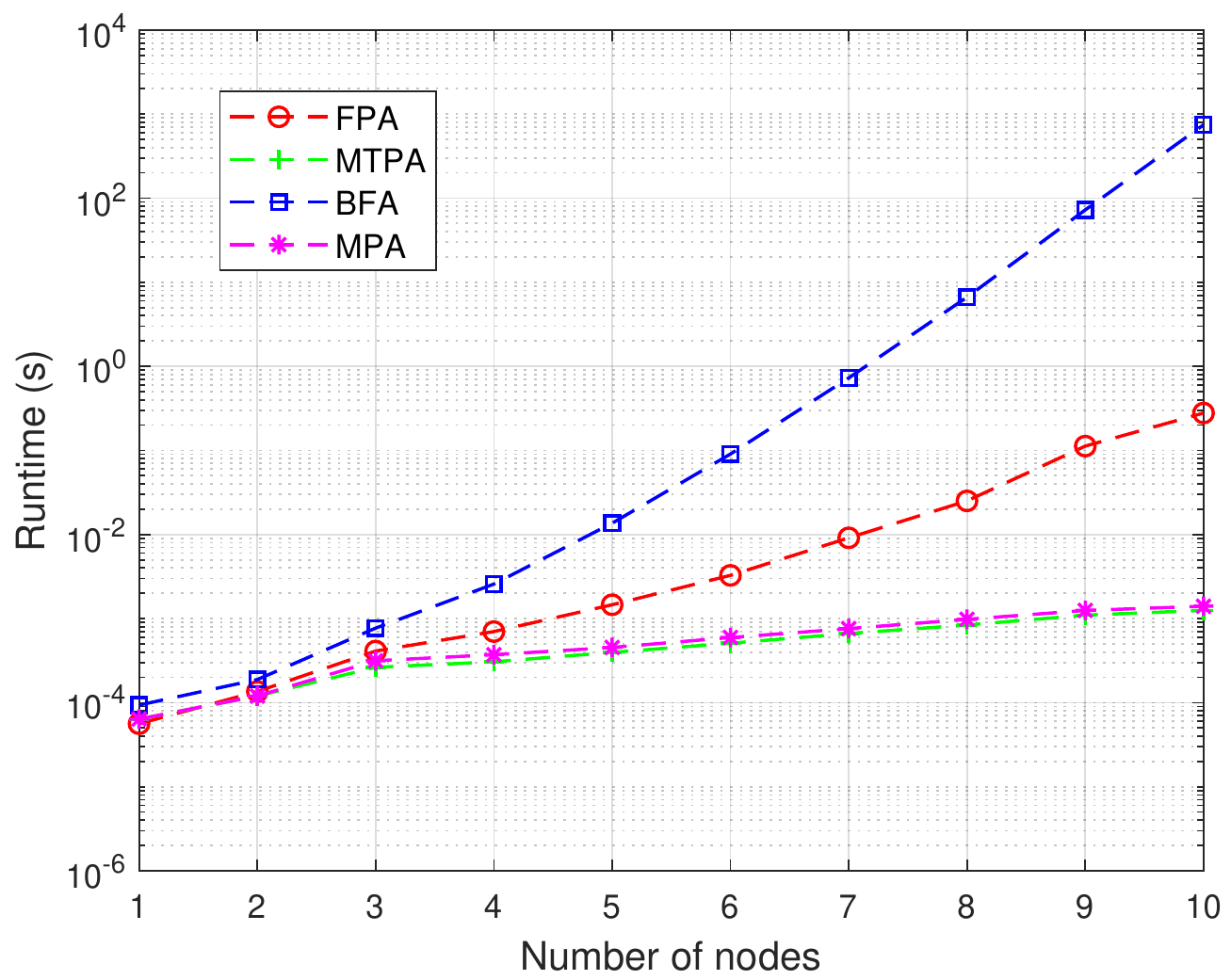}
\caption{Runtime of the algorithms for different number of users.} \label{fig:runtime}
\end{figure}

Fig. \ref{fig:runtime} shows the average runtime of the proposed algorithms for increasing number of users. The runtime of BFA increases exponentially with increasing number of users, therefore, has a large computational burden. On the other hand, the proposed optimal algorithm FPA decreases the runtime significantly by reducing the search space for the optimum schedule via smart pruning mechanisms. For a network of $10$ users, FPA achieves a runtime around one thousandth of the runtime of BFA. Furthermore, the runtime of the proposed polynomial-time algorithms MTPA and MPA increases almost linearly, as expected. Evaluating Fig. \ref{fig:size} and Fig. \ref{fig:runtime} together, we can observe that MTPA and MPA are scalable algorithms that can achieve close to optimal solutions in reasonable runtimes even for large network sizes, for which use of an exponential-time algorithm would be intractable.

\vspace{-0.5cm}
\section{Conclusion} \label{sec:conclusion}

In this paper, we have considered a WPCN where multiple users can harvest energy from and communicate data to a hybrid access point that can supply RF energy in full duplex manner. We have investigated the minimum length scheduling problem to determine the optimal power control, time allocation and transmission schedule subject to data, energy causality and maximum transmit power constraints. We have formulated the problem as a MINLP problem, which is generally difficult to solve for a global optimum, and conjectured that the problem is NP-hard. We have provided a solution strategy in which the power control and time allocation, and the scheduling problems are decomposed. For the power control and time allocation problem, we have proposed optimal closed-form solution. For the scheduling, we have introduced the penalty function through which we have analyzed the characteristics of the optimal solution. We have proposed two polynomial time heuristic algorithms and an exact fast enumeration algorithm based on the optimality conditions. Through simulations, we have illustrated that the heuristic algorithms perform very close-to-optimal outperforming the conventional schemes significantly. 

\bibliographystyle{ieeetr}
\bibliography{bib_shahid}
\begin{IEEEbiography}[{\includegraphics[width=1in,height=1.25in,clip,keepaspectratio]{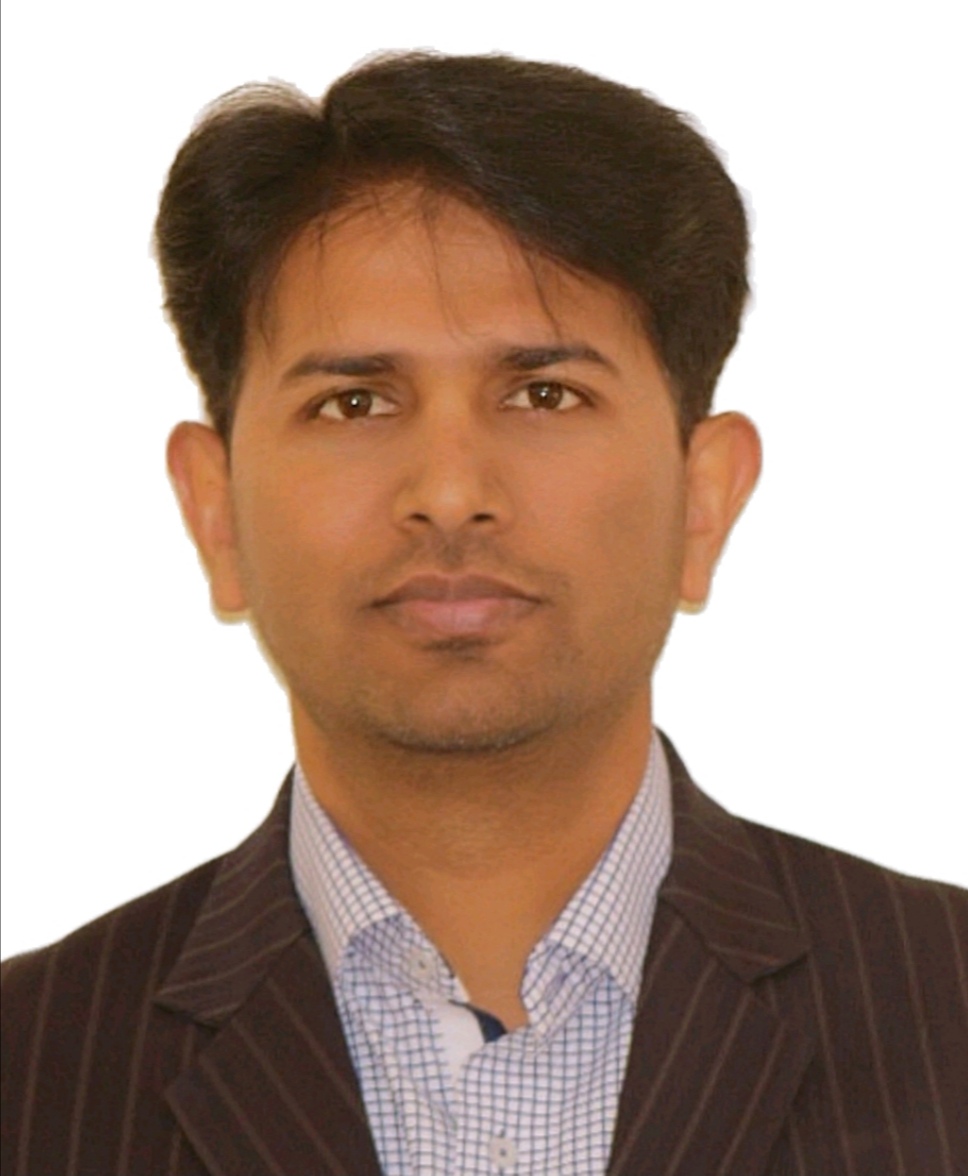}}]{Muhammad Shahid Iqbal}
 received the BS degree in telecom engineering from Government College University, Faisalabad, in 2009 and the MS degree in electrical engineering from National University of Sciences and Technology (NUST), Islamabad, Pakistan in 2013. Since 2016 he is pursuing his Ph.D. degree in electrical and electronics engineering from Koc University, Istanbul. He has been a lecturer with the Department of Electrical Engineering, The University of Lahore, from 2013 to 2014 and with the University of Gujrat from 2014 to 2016. His research interests include wireless communication design, wireless powered communication networks, and sensor networks. He received the higher education commission (HEC) Pakistan Fellowship in 2016 for full time Ph.D studies.
\end{IEEEbiography}
\vskip 0pt plus -1fil
\begin{IEEEbiography}[{\includegraphics[width=1in,height=1.25in,clip,keepaspectratio]{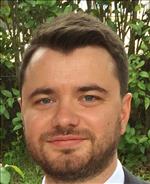}}]{Yalcin Sadi}
 received the B.S., M.S., and Ph.D. degrees in electrical and electronics engineering from Koc University, Istanbul, in 2010, 2012, and 2015, respectively. Since 2016, he has been an Assistant Professor with the Department of Electrical and Electronics Engineering, Kadir Has University, Istanbul. His research interests include wireless communication design and networking for cellular communications, machine-to-machine communications, and sensor networks. He received The Scientific and Technological Research Council of Turkey (TUBITAK) Career Development Program Grant in 2019, the TUBITAK Graduate Fellowship in 2010, and the Koc University Full Merit Scholarship in 2005. 
\end{IEEEbiography}
\vskip 0pt plus -1fil
\begin{IEEEbiography}[{\includegraphics[width=1in,height=1.25in,clip,keepaspectratio]{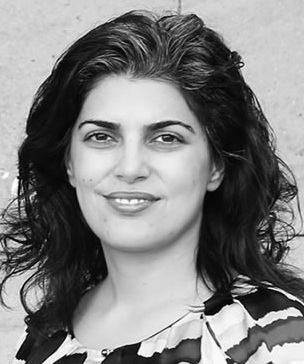}}]{Sinem Coleri}
 received the BS degree in electrical and electronics engineering from Bilkent University in 2000, the M.S. and Ph.D. degrees in electrical engineering and computer sciences from University of California Berkeley in 2002 and 2005. She worked as a research scientist in Wireless Sensor Networks Berkeley Lab under sponsorship of Pirelli and Telecom Italia from 2006 to 2009. Since September 2009, she has been a faculty member in the department of Electrical and Electronics Engineering at Koc University, where she is currently Professor. Her research interests are in wireless communications and networking with applications in cyber-physical systems, machine-to-machine communication, sensor networks and intelligent transportation systems. 
She received College of Engineering Outstanding Faculty Award at Koc University and IEEE Communications Letters Exemplary Editor Award as Area Editor in 2019, Outstanding Achievement Award by Higher Education Council and Academician of the Year Award by ANTIKAD (Antalya Businesswoman Association) in 2018, IEEE Communications Letters Exemplary Editor Award and METU- Prof. Dr. Mustafa Parlar Foundation Research Encouragement Award in 2017, IEEE Communications Letters Exemplary Editor Award and Science Heroes Association - Scientist of the Year Award in 2016, Turkish Academy of Sciences Distinguished Young Scientist (TUBA-GEBIP) and TAF (Turkish Academic Fellowship) Network - Outstanding Young Scientist Awards in 2015, Science Academy Young Scientist (BAGEP) Award in 2014, Turk Telekom Collaborative Research Award in 2011 and 2012, Marie Curie Reintegration Grant in 2010, Regents Fellowship from University of California Berkeley in 2000 and Bilkent University Full Scholarship from Bilkent University in 1995. She has been Area Editor of IEEE Communications Letters and IEEE Open Journal of the Communications Society since 2019, Editor of IEEE Transactions on Communications since 2017 and Editor of IEEE Transactions on Vehicular Technology since 2016. 
\end{IEEEbiography}
\end{document}